\newcommand{\realrange}[2]{\left[#1, #2\right]}
\newcommand{\unitrange}[2]{\realrange{0}{1}}
\newcommand{\llabel}[1]{\label{\labelprefix:#1}}
\newcommand{\labelprefix}{} % later redefined using renewcommand
\newcommand{\discussionsize}{\small}
\newenvironment{code}{\noindent%\sf%
\begin{tabbing}%
\hspace{1em}\=\hspace{1em}\=\hspace{1em}\=\hspace{1em}\=\hspace{1em}\=%
\hspace{1em}\=\hspace{1em}\=\hspace{1em}\=\hspace{1em}\=\hspace{1em}\=%
\kill}{\end{tabbing}}
\newcommand{\labelcommand}{}
\newcommand{\captiontext}{}
\newsavebox{\codeparam}
\newcounter{lineNumber}
\newenvironment{disscodepos}[3]{%
\renewcommand{\labelcommand}{#2}%
\renewcommand{\captiontext}{#3}%
\sbox{\codeparam}{\parbox{\textwidth}{#3}}%
\begin{figure}[#1]\begin{center}\begin{code}\setcounter{lineNumber}{1}}{%
\end{code}\end{center}\caption{\llabel{\labelcommand}\captiontext}\end{figure}}
\newcommand{\Do}       {{\bf do\ }}
\newcommand{\For}      {{\bf for\ }}
\newcommand{\Foreach}      {{\bf foreach\ }}
\newcommand{\If}       {{\bf if\ }}
\newcommand{\Then}     {{\bf then\ }}
\newcommand{\Return}   {{\bf return\ }}
\newdimen\endofsize\endofsize=0.5em
\def\endofbeweis{~\quad\hglue\hsize minus\hsize
                 \hbox{\vrule height \endofsize width
\endofsize}\par}
\def\RR{\mathbb R}
\def\EE{\mathbb E}
\def\cI{\mathcal I}
\def\cJ{\mathcal J}
\def\cL{\mathcal L}
\def\cN{\mathcal N}
\def\cS{\mathcal S}
\def\bT{\mathbf T}
\def\br{\mathbf r}
\def\OPT{\textsc{Opt}}
\def\TB{\textsc{Tb}}
\newcommand{\raf}[1]{(\ref{#1})}
\newcommand{\poly}{\operatorname{poly}}
\newcommand{\polylog}{\operatorname{polylog}}
\newcommand{\level}{\operatorname{level}}
\newcommand{\argmax}{\operatorname{argmax}}
\newcommand{\hide}[1]{}
\newcommand{\qed}{\hfill$\square$\bigskip}
\newcommand{\proof}{\noindent {\bf Proof}.~~}
\newtheorem{theorem}{Theorem}
\newtheorem{lemma}{Lemma}
\newtheorem{corollary}{Corollary}
\newtheorem{proposition}{Proposition}
\newtheorem{definition}[theorem]{Definition}
\def\xy{{$(x_i \vee x_j)$}}
\def\xcy{{$(\overline{x_i} \vee x_j)$}}
\def\xyc{{$(x_i \vee \overline{x_j})$}}
\def\xcyc{{$(\overline{x_i} \vee \overline{x_j})$}}
\date{}
\title{On Profit-Maximizing Pricing for the Highway and Tollbooth Problems}
\author{Khaled Elbassioni\thanks{Max-Planck-Institut f\"ur Informatik, Saarbr\"ucken, Germany; (\{elbassio,rraman\}@mpi-inf.mpg.de)}
\and Rajiv Raman\footnotemark[1] 
\and Saurabh Ray \thanks{Universit{\"a}t des Saarlandes, Saarbr{\"u}cken, Germany; (saurabh@cs.uni-sb.de)} 
\and Ren{\'e} Sitters\thanks{Department of Mathematics and Computer Science, VU, Amsterdam , the Netherlands; (rsitters@feweb.vu.nl)}
}
\begin{document}
\maketitle
\begin{abstract}
In the \emph{tollbooth problem}, we are given a tree $\bT=(V,E)$ with $n$ edges, and a set of $m$ customers, each of whom is interested 
in purchasing a path on the tree. Each customer has a fixed budget, and the objective is to price the edges of $\bT$ such that the total 
revenue made by selling the paths to the customers that can afford them is maximized. 
An important special case of this problem, known as the \emph{highway problem}, is when $\bT$ is restricted to be a line.

For the tollbooth problem, we present a randomized $O(\log n)$-approximation,
improving on the current best $O(\log m)$-approximation. 
We also study a special case of the tollbooth problem, when all the paths
that customers are interested in purchasing go towards a fixed root of $\bT$. In this case, we present an algorithm that returns
a $(1-\epsilon)$-approximation, for any $\epsilon > 0$, and runs in quasi-polynomial time. On the other hand, we rule out the existence of an FPTAS by showing that even for the line case, the problem is strongly NP-hard. Finally, we show that in the \emph{coupon model}, when we allow some items to be priced below zero to improve the overall profit, the problem becomes even APX-hard.
\end{abstract}
\section{Introduction}
Consider the problem of pricing the bandwidth along the links of a network such that the revenue obtained from customers interested 
in buying bandwidth along certain paths in the network is maximized. Suppose that each customer declares a set of paths 
she is interested in buying, and a maximum amount she is is willing to pay for each path. 
The network service provider's objective is to assign single prices to the 
links such that the total revenue from customers who can afford to purchase their paths is maximized. 
%(It is assumed that each customer who can afford to purchase her path, will indeed do so). 
Recently, numerous papers have appeared on the computational complexity of such pricing
problems~\cite{AH06,BB07,BK06,BK07,CS08,DHFS06,ESZ07,GLSU06,GRR06,GHKKKM05,HK05,CS08}. 

A special case of this problem, where each customer is interested in purchasing only a single path (\emph{single-minded}), and where
there is no upper bound on the number of customers purchasing each link (\emph{unlimited supply}) was studied
by Guruswami et al. \cite{GHKKKM05}, under the name of \emph{tollbooth problem}. The authors of \cite{GHKKKM05} showed that the 
problem is already APX-hard when the network is restricted to be a tree, and
also presented a polynomial time algorithm for the case when all paths start at a certain root of the tree. In \cite{GHKKKM05}, the
authors also studied the \emph{highway problem}, a further restriction where the tree is a path, and gave polynomial time algorithms when either the budgets are bounded and integral, or all paths have a bounded length. 

In this paper, we continue the study of these problems. For the tollbooth problem, 
the best known approximation factor was $O(\log n+\log m)$, where $n$ and $m$ are respectively the number of 
edges of the tree and the number of customers. This result 
applies in fact for general sets \cite{GHKKKM05}, and not necessarily paths of a network, and even in the non single-minded case \cite{BBM08}. 
Very recently, and more generally, Cheung and Swamy \cite{CS08} gave an algorithm that, given any LP-based $\alpha$-approximation algorithm for 
maximizing the social welfare under limited supply, returns a solution with profit within a factor of $\alpha\log u_{max}$ of the maximum, 
where $u_{max}$ is the maximum supply of an item. In particular, this gives an $O(\log m)$-approximation for the tollbooth problem on trees. 
In this paper, we give an $O(\log n)$-approximation which is an improvement over the $O(\log m)$ since $n\leq 3m$ can be always assumed. 
We also show that if all the paths are going towards a certain root, then a $(1-\epsilon)$-approximation can be obtained in quasi-polynomial time. 
This result extends a recently developed quasi-PTAS \cite{ESZ07} for the highway problem, and uses essentially the same technique. However, there is a number of technical issues that have to be resolved for this technique to work on trees; most notably is the use of the Separator Theorem for trees, and the modification of the price-guessing strategy to allow only for \emph{one-sided} guesses.  

The existence of a quasi-PTAS for the highway problem indicates that a PTAS or even an FPTAS is still a possibility, since the problem was only known to be weakly NP-hard  \cite{BK06}. In the last section of this paper, we show that the highway problem is indeed strongly NP-hard and hence admits no FPTAS unless P=NP.

Balcan et al. \cite{BBCH07} considered a model in which some items can be priced below zero (in the form of a discount) so that the overall profit is maximized. They gave a $4$-approximation for the uniform budgets case, and a quasi-PTAS for a special case in which there is an optimal pricing that has only a bounded number of negatively priced items. Here we show that the existence of a quasi-PTAS in the general case is highly unlikely, by showing that the problem is APX-hard.  

In the next section, we give a formal definition of the problem. In Section \ref{s3}, we give a
$O(\log n)$ approximation for trees and in Section \ref{s4} we give a quasi-PTAS for the case of uncrossing paths. We conclude in Section \ref{s5}.
 
\section{The tollbooth problem on trees}\label{s2}
\subsection{Notation} 
Let $\bT=(V,E)$ be a tree. 
We assume that we are given a (multi)set of paths $\cI=\{I_1,\ldots,I_m\}$, defined on the set of edges $E$, where $I_j=[s_j,t_j]\subseteq E$ is the path connecting $s_j$ and $t_j$ in $\bT$. For $I_j\in\cI$, we denote by $B(I_j)\in\RR_+$ the {\it budget} of path $I_j$, i.e., the maximum amount of money customer $j$ is willing to pay for purchasing path $I_j$. In the \emph{tollbooth problem}, denoted henceforth by $\TB$, the objective is to assign a price $p(e)\in\RR_+$ for each edge $e\in E$, and to find a subset $\cJ\subseteq\cI$, so as to maximize 
\begin{equation}\label{obj}
\sum_{I\in\cJ}p(I)
\end{equation} 
subject to the budget constraints
\begin{equation}\label{bud-cons}
p(I)\le B(I),~~\mbox{for all } I\in \cJ,
\end{equation}
where, for $I\in\cI$, $p(I)=\sum_{e\in I}p(e)$. 

\hide{When the number of copies of each edge $e\in E$, available for purchase, is limited by a given number, the problem will be called the \emph{capacitated tollbooth problem}: 
Find a pricing $p:E\mapsto\RR_+$ and a subset $\cJ\subseteq\cI$, so as to maximize \raf{obj} subject to \raf{bud-cons} and the capacity constraints: 
\begin{equation}\label{cap-cons}
|\{I\in\cJ:e\in I\}|\leq c(e),~~\mbox{for all } e\in E.
\end{equation}
}
\medskip

For a node $w\in V$, let $\cI[w]\subseteq\cI$ be the set of paths that pass through $w$. In section \ref{s4}, we will assume that the tree is rooted at some node $\br\in V$.
The depth of $\bT$, denoted $d(\bT)$, is the length of the longest path from the root $\br$ to a leaf. For a node $w\in V$, we denote by $\bT(w)$, the subtree of $\bT$ rooted at $w$ (excluding the path from the parent of $w$ to $\br$), and for a subtree $\bT'$ of $\bT$ we denote by $V(\bT'), E(\bT')$ and $\cI(\bT')$ the vertex set, edge set, and set of intervals contained completely in $\bT',$ respectively. 
%For two nodes $u,v\in V(\bT)$, we denote by $\lcm(u,v)$ the lowest common ancestor of $u$ and $v$ in $\bT$.

\medskip

\subsection{Preliminaries}
In the following sections, we denote by $p^*:E\mapsto\RR_+$ an optimal set of prices, and by $\OPT\subseteq\cI$ the set of intervals purchased in this optimum solution. For a subset of intervals $\cI'\subseteq\cI$, and a price function $p:E\mapsto\RR_+$, we denote by $p(\cI')=\sum_{I\in\cI'}p(I)$ the total price of intervals in $\cI'$. 

\medskip

It easy to see that $n\leq 3m$ may be assumed without loss of generality. 
%This justifies our consideration of an $O(\log n)$-approximation algorithm, given that an $O(\log m)$-approximation algorithm is known \cite{CS08}. 
Indeed, if we root the tree at some vertex $\br$, then for every vertex $v\in V$, we may assume that there is either an interval $I\in\cI$ beginning at $v$ or an interval $I\in\cI$ that passes through two different children of $v$; otherwise, every interval through $v$ must contain its parent $u$ (unless $v=\br$ in which case all edges incident to $\br$ can be contracted), and hence we can contract the edge $e=\{u,v\}$ and increase by $p^*(e)$ the prices of each the edges $\{v,v'\}$ for each child $v'$ of $v$. 

Let $\epsilon>0$ be a given constant. 

\begin{proposition}[\cite{ESZ07}]\label{p1}
Let $p^*$ be an optimal solution for a given instance of $\TB$, and $\epsilon>0$ be a given constant. Then there exists a price function $\tilde{p}:E\mapsto\RR_+$ for which
 
(i) $\tilde{p}(e)\in\{0,1,\ldots,P\}$, for every $e\in E$, where $P=nm/\epsilon$,

(ii) $\tilde{p}(I)\leq\frac{B(I)}{1+\epsilon}$, for every $I\in\OPT$, and

(iii) $\tilde{p}(\OPT)\geq (1-2\epsilon) p^*(\OPT)$.
\end{proposition}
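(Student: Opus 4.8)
The strategy is to obtain $\tilde p$ from $p^*$ by a two-stage modification: first a multiplicative down-scaling to create slack in the budget constraints, then a rounding to integer multiples of a sufficiently small unit so that (after rescaling) the prices land in $\{0,1,\dots,P\}$. The slack created in the first stage must be large enough to absorb the rounding error in the second.

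\emph{Stage 1 (scaling).} Define $p' = p^*/(1+\epsilon)$. Then for every $I\in\OPT$ we have $p'(I)=p^*(I)/(1+\epsilon)\le B(I)/(1+\epsilon)$, and $p'(\OPT)=p^*(\OPT)/(1+\epsilon)\ge (1-\epsilon)p^*(\OPT)$, using $1/(1+\epsilon)\ge 1-\epsilon$. This already gives a weaker form of (ii) and of (iii); the point is that each interval now has a multiplicative budget cushion of $1+\epsilon$, equivalently an additive cushion of $\epsilon\, p'(I)$ below $B(I)/1$, which after the next rescaling will be at least one rounding unit on each of the $\le n$ edges of $I$.

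\emph{Stage 2 (rounding and rescaling).} We may assume every edge used by some interval of $\OPT$ has $p^*(e)\le B_{\max}:=\max_I B(I)$, since an edge priced above every relevant budget carries no $\OPT$-interval and can be set to $0$; normalize so that $B_{\max}=1$ (this is just a choice of currency unit and does not affect the claim, which is scale-invariant in the obvious sense — alternatively carry $B_{\max}$ through the bookkeeping). Round each $p'(e)$ \emph{down} to the nearest multiple of $\delta:=\epsilon/(nm)$, call it $p''(e)$, and set $\tilde p(e):=p''(e)/\delta\in\natnull$. Since $p''(e)\le p'(e)\le p^*(e)\le 1$, we get $\tilde p(e)\le 1/\delta = nm/\epsilon = P$, giving (i). For (ii): $\tilde p$ corresponds (in the original currency) to $p''\le p'$, so $p''(I)\le p'(I)\le B(I)/(1+\epsilon)$ for every $I\in\OPT$. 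For (iii): rounding each of the at most $n$ edges of $I$ down loses at most $n\delta$, so $p''(I)\ge p'(I)-n\delta \ge p'(I)-\epsilon/m$; summing over the at most $m$ intervals of $\OPT$ gives $p''(\OPT)\ge p'(\OPT)-\epsilon \ge (1-\epsilon)p^*(\OPT)-\epsilon\, p^*(\OPT)/... $ — here one uses $p^*(\OPT)\ge B_{\max}=1$ (the optimum earns at least the largest single budget, achievable by pricing one such interval's edges to meet its budget), so $\epsilon/m \le \epsilon\, p^*(\OPT)/m$ and summing yields $p''(\OPT)\ge (1-2\epsilon)p^*(\OPT)$.

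\emph{Main obstacle.} The only delicate point is ensuring the additive rounding loss $\epsilon/m$ per interval is small \emph{relative to $p^*(\OPT)$}, not in absolute terms; this is exactly where the normalization $B_{\max}=1$ together with the trivial lower bound $p^*(\OPT)\ge B_{\max}$ is used. One must also double-check that the statement of (ii) is about $\tilde p$ interpreted through the same scaling by which $P$ was defined, i.e. that "$\tilde p(I)\le B(I)/(1+\epsilon)$" is read with $\tilde p$ in the integer currency and $B(I)$ correspondingly scaled by $1/\delta$; since both sides scale identically this is a non-issue, but it is worth stating cleanly. Everything else is routine bookkeeping with the inequalities $1/(1+\epsilon)\ge 1-\epsilon$ and $(1-\epsilon)^2\ge 1-2\epsilon$.
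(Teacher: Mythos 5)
Your argument is correct and is essentially the standard scale-then-round proof from the cited reference \cite{ESZ07} (the paper itself does not reprove Proposition \ref{p1}): divide by $1+\epsilon$ to create budget slack, round down to multiples of $\delta=\epsilon B_{\max}/(nm)$, and absorb the total rounding loss $nm\delta=\epsilon B_{\max}\le\epsilon\, p^*(\OPT)$ using the trivial bound $p^*(\OPT)\ge B_{\max}$. Your observation that (i) and (ii) must be read with budgets rescaled by the same unit $\delta$ is the correct reading of the statement and consistent with how the paper later uses it.
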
  
We shall call the set of prices $\tilde{p}$ satisfying the conditions of Proposition \ref{p1}, \emph{$\epsilon$-optimal prices}.

\medskip

We will make use of the following well-known separator result for trees.

\begin{proposition}\label{p2}
Let $T=(V,E)$ be a tree. Then there exists a node $v$ (called separator node) with the following property: Let $s_1,\ldots,s_r$ be the sizes 
of the components obtained by deleting $v$ from $\bT$, then there is a subset $S\subseteq[r]$ such that 
\begin{equation}\label{b}
\lfloor\frac{n}{3}\rfloor\leq\sum_{i\in S}s_i\leq\lceil\frac{2n}{3}\rceil.
\end{equation}
Such a separator can be found in linear time.
\end{proposition}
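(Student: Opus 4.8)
The plan is to prove the standard tree-centroid theorem in two moves: first locate a node $v$ whose deletion leaves no component larger than about half of $\bT$, and then argue that such balanced components can always be bundled into one collection whose total size lands in the window $[\lfloor n/3\rfloor,\lceil 2n/3\rceil]$. I measure the size of a subtree by its number of nodes, so that, since $\bT$ has $n+1$ nodes, the components of $\bT-v$ have sizes summing to exactly $n$ for any $v$. If $n\le 2$ then $\lfloor n/3\rfloor=0$ and $S=\emptyset$ works, so from now on $n\ge 3$; this also guarantees the floor/ceiling estimates used below, chiefly $\lfloor(n+1)/2\rfloor\le 2n/3$ (while $2\lfloor n/3\rfloor\le 2n/3$ is automatic).

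First I would find a centroid. Root $\bT$ at an arbitrary node $\br$ and compute $f(u):=|V(\bT(u))|$ for every node $u$ in one depth-first pass. Starting from $\br$, repeatedly descend to a child $u'$ with $f(u')>(n+1)/2$ as long as such a child exists (at most one child can qualify, since two would have vertex-disjoint subtrees totalling more than $n+1$ nodes). Since $f$ strictly decreases along any root-to-node path the descent terminates, at a node $v$. Every component of $\bT-v$ then has at most $\lfloor(n+1)/2\rfloor$ nodes: each child-subtree of $v$ has $f\le(n+1)/2$ by the stopping rule, and if $v\ne\br$ the component containing $\br$ has $n+1-f(v)<(n+1)/2$ nodes, because one descends into $v$ from its parent only when $f(v)>(n+1)/2$.

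Next I would bundle. Write the component sizes of $\bT-v$ as $s_1\ge\cdots\ge s_r$, so $\sum_i s_i=n$ and $s_1\le\lfloor(n+1)/2\rfloor$. If $s_1\ge\lfloor n/3\rfloor$, take $S=\{1\}$: the left inequality of \raf{b} is immediate and $s_1\le\lfloor(n+1)/2\rfloor\le 2n/3\le\lceil 2n/3\rceil$. Otherwise every $s_i<\lfloor n/3\rfloor$; since $\sum_i s_i=n\ge\lfloor n/3\rfloor$, let $k$ be minimal with $s_1+\cdots+s_k\ge\lfloor n/3\rfloor$ and set $S=\{1,\dots,k\}$. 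The left inequality holds by minimality of $k$, and $\sum_{i\in S}s_i=(\sum_{i<k}s_i)+s_k<\lfloor n/3\rfloor+\lfloor n/3\rfloor\le 2n/3\le\lceil 2n/3\rceil$ gives the right one. For the running time: the $f$-values come from one DFS, the centroid $v$ from one root-to-leaf walk, and the bundling from one pass over the at most $n$ components (check whether a single component already reaches $\lfloor n/3\rfloor$; if not, add arbitrary components one at a time, each below $\lfloor n/3\rfloor$, so the running sum cannot overshoot past $2\lfloor n/3\rfloor$). This is $O(n)$.

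I do not expect a genuine obstacle, but the one point to be careful about is the choice of pivot. It is tempting to take $v$ straight from the $1/3$-threshold --- the deepest node with $f(v)\ge\lfloor n/3\rfloor$ --- but then the complementary (``upward'') component can come out slightly larger than $\lceil 2n/3\rceil$ (already on a path), and no sub-collection of the pieces fits the window. Taking the half-size centroid as the pivot instead keeps every component below $(n+1)/2\le 2n/3$, which is exactly what makes the $s_1\ge\lfloor n/3\rfloor$ case go through; the remaining content is just the floor/ceiling bookkeeping, which is the only other thing that has to be verified exactly.
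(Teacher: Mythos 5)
The paper states this proposition as a well-known fact and gives no proof of its own, so there is nothing to compare against; your argument is the standard one (find a centroid $v$ so that every component of $\bT-v$ has at most $\lfloor(n+1)/2\rfloor$ of the $n+1$ nodes, then either a single component already lands in $[\lfloor n/3\rfloor,\lceil 2n/3\rceil]$ or all components are small and a greedy prefix sum cannot overshoot $2\lfloor n/3\rfloor$), and I checked the details: the component sizes sum to $n$, the inequality $\lfloor(n+1)/2\rfloor\le 2n/3$ holds exactly for $n\ge 3$, the small cases $n\le 2$ are handled by $S=\emptyset$, and the whole procedure is a constant number of linear passes. Your proof is correct and complete as written.
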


This gives a recursive partitioning of $\bT$ in the following standard way: Let $v_0$ be a separator vertex in $\bT$ and $T_1,\ldots,T_r$ be the components of $T-v_0$. Recursively, find separator vertices $v_1,\ldots v_r$ in $\bT_1,\ldots,\bT_r$. We say that node $v_0$ has $\level(v_0)=1$, nodes $v_1,\ldots,v_r$ have level 2, and in general if node $v$ is a separator vertex in the subtree $\bT'$ obtained by deleting one-higher level separator vertex $v'$ then $\level(v)=\level(v')+1$. By \raf{b}, the maximum number of levels $k$ in this decomposition is at most $\log_{3/2}n$. We shall denote by $\cN(\bT)$ the set of separator nodes used in the full decomposition of $\bT$.

\hide{
\subsection{Envy-free pricing}
A pricing $p:E\mapsto\RR_+$ is said to be \emph{envy-free} if for all $j\in[m]$, $p(I_j)<B(I_j)$ implies that customer $j$ can purchase
her preferred set $I_j$. This definition makes sense only in the limited supply-case, since in t he unlimited supply case, every pricing is envy-free. 
While it is not clear how to extend our QPTAS to return such an envy-free pricing, we will be able to ensure that it return a pricing 
satisfying the following approximate envy-free condition.

\begin{definition}\label{d0}\emph{($\epsilon$-Envy-free pricings)}
Let $\epsilon>0$ be a constant. A pricing $p:E\mapsto\RR_+$ is said to be $\epsilon$-\emph{envy-free}, if for all $j\in[m]$, $p(I_j)\leq (1-\epsilon)B(I_j)$ implies that customer $j$ will be able purchase the set $I_j$.
\end{definition}
}

\section{An $O(\log n)$ approximation for the tollbooth problem on trees}\label{s3}

In this section, we prove the following theorem.

\begin{theorem}\label{t1}
There is a deterministic $O(\log n)$-approximation algorithm for $\TB$. 
\end{theorem}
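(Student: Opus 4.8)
The plan is to combine a balanced version of the separator decomposition of Proposition~\ref{p2} with the polynomial‑time algorithm of~\cite{GHKKKM05} for the rooted case. Decompose $\bT$ recursively: at a subtree $\bT'$ with $n'=|E(\bT')|\ge 2$ edges, pick a separator vertex $v'$ and, using~\raf{b}, a \emph{nonempty proper} set $S$ of components of $\bT'-v'$; let $L$ (resp.\ $R$) consist of $v'$ together with the components in $S$ (resp.\ not in $S$), each viewed as a tree rooted at $v'$, and recurse on $L$ and on $R$, stopping when a piece has at most one edge. Since $|E(L)|,|E(R)|\le\lceil 2n'/3\rceil$ and both are positive, the recursion has depth $k=O(\log n)$, and the pieces appearing at any fixed depth $\ell$ are pairwise edge‑disjoint (they sit in different branches of an edge‑partitioning recursion). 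I assign each interval $I\in\cI$ to the level $\level(I)$ equal to the smallest depth at which $I$ lies in a piece $\bT'=(L,R)$ yet uses edges of both $L$ and $R$ (and to the depth of the one‑edge piece containing it otherwise). Since $V(L)\cap V(R)=\{v'\}$ and $L,R$ are convex subtrees, such an $I$ is the union of a root path $P_L(I)\subseteq L$ from an endpoint of $I$ up to $v'$ and a root path $P_R(I)\subseteq R$ from the other endpoint up to $v'$, so $p(I)=p(P_L(I))+p(P_R(I))$ for every pricing $p$.

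For each level $\ell$ and each piece $\bT'=(L,R)$ of depth $\ell$, the algorithm treats the level‑$\ell$ intervals of $\bT'$ as follows. Run~\cite{GHKKKM05} on the rooted instance on $L$ (root $v'$) whose customers are these intervals, with $I$ wanting $P_L(I)$ and having budget $B(I)$; this returns a pricing $p^L$ of $E(L)$ with revenue $r_L$. Run the symmetric instance on $R$ to get $p^R,r_R$. The piece contributes the better of two feasible $\TB$‑solutions: price $E(L)$ by $p^L$ and $E(R)$ by $0$, selling every $I$ with $p^L(P_L(I))\le B(I)$ — affordable since then $p(I)=p^L(P_L(I))$ — for revenue $r_L$; or the mirror solution of revenue $r_R$ (one‑edge pieces are priced directly). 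Taking, for each $\ell$, the edge‑disjoint union of the per‑piece pricings over all depth‑$\ell$ pieces yields a single feasible solution $p_\ell$ of $\bT$; the algorithm outputs the best of $p_1,\dots,p_k$. This is deterministic and polynomial: $O(n\log n)$ pieces, two rooted‑case calls each.

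The analysis is a short averaging argument once the structure is in place. Fix $\ell$ and a piece $\bT'=(L,R)$; let $O$ be the level‑$\ell$ intervals of $\bT'$ that lie in $\OPT$. For $I\in O$ call $L$ the \emph{heavy side} if $p^*(P_L(I))\ge p^*(I)/2$ and $R$ otherwise, splitting $O=O_L\sqcup O_R$. Restricting $p^*$ to $E(L)$ is feasible for the rooted instance on $L$, and it sells every $I\in O_L$ (because $p^*(P_L(I))\le p^*(I)\le B(I)$) at price $p^*(P_L(I))\ge p^*(I)/2$; hence $r_L\ge\frac12\sum_{I\in O_L}p^*(I)$, and symmetrically $r_R\ge\frac12\sum_{I\in O_R}p^*(I)$, so the piece contributes $\max\{r_L,r_R\}\ge\frac14\sum_{I\in O}p^*(I)$. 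Summing over the edge‑disjoint depth‑$\ell$ pieces, $p_\ell$ collects at least $\frac14$ of $\sum_{I\in\OPT,\ \level(I)=\ell}p^*(I)$; as the levels partition $\OPT$ and $k\le\log_{3/2}n$, the best $p_\ell$ collects $\ge\frac{1}{4k}p^*(\OPT)=\Omega(p^*(\OPT)/\log n)$, which is the claimed approximation.

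The main obstacle is the budget coupling between the halves $P_L(I),P_R(I)$ of an interval that meets a separator: one cannot price both sides of a piece optimally and superimpose the two pricings, since $I$ would then be charged on both halves and might exceed $B(I)$. Using a \emph{balanced bipartition} (rather than just a separator vertex with its possibly many components) is precisely what fixes this: it forces every interval handled at a level to be genuinely two‑sided, so it can be charged entirely on its heavier half while the other half is priced at $0$, losing only a constant factor per level; the $O(\log n)$ factor is then just the number of levels. The only other point to handle with care is the base of the recursion, but one‑edge pieces are priced directly and contribute no loss.
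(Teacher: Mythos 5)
Your proof is correct, and while it shares the paper's skeleton (separator-based decomposition into $O(\log n)$ levels, each level handled via the rooted-case dynamic program of \cite{GHKKKM05}, a ``charge each interval to its more profitable half'' accounting), the key step is handled by a genuinely different and arguably cleaner mechanism. The paper levels an interval by the smallest-level separator vertex it contains, which reduces each level to instances where all paths pass through a common node $v$; since the two halves of such a path may land in arbitrary components of $T-v$, the paper must use a randomized cut (keep a random subset $X$ of the neighbors of $v$, retain only halves hitting $X$ suitably, then run the rooted DP), losing a factor $8$ in expectation, and then derandomize via pairwise independence. You instead recurse on a balanced \emph{bipartition} $(L,R)$ at the separator and level an interval by the first depth at which it is actually cut; this forces every interval at its level to split into exactly one rooted path on each side, so you can run the rooted DP once per side with each customer appearing once, and deterministically take the better side. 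The heavy-side argument then gives a factor $4$ per level with no randomization and no derandomization step, at the cost of a slightly more careful decomposition (you must check the two sides are edge-disjoint, both of size at most $\lceil 2n/3\rceil$, and that every interval is eventually cut or reduced to a single edge — all of which you do). Both proofs are valid; yours trades the probabilistic argument for a purely combinatorial one and improves the per-level constant.
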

The proof goes along the same lines used in  \cite{BB06} to obtain an $O(\log n)$-approximation for the highway problem. The algorithm consists of 3 main steps: Partitioning, ``randomized cut'', and then dynamic programming. We can then derandomize it to obtain a deterministic algorithm.

\medskip

We say that the given set of paths $\cI$ is \emph{rooted}, if all the paths in $\cI$ start at some node $\br$, called the root of $\bT$.
We will also make use of the following theorem.
\begin{theorem}[\cite{GHKKKM05}]\label{t3}
The tollbooth problem on rooted paths can be solved in polynomial time using dynamic programming.
\end{theorem}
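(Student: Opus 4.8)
The plan is to pass from edge prices to root‑path potentials, argue that only polynomially many price values are relevant, and then solve the resulting problem by a bottom‑up dynamic program over the rooted tree. Root $\bT$ at $\br$. A price function $p:E\to\RR_+$ corresponds bijectively to the potential $\pi:V\to\RR_+$ defined by $\pi(v)=\sum_{e\in[\br,v]}p(e)$: one has $\pi(\br)=0$ and $\pi(v)\ge\pi(u)$ whenever $u$ is the parent of $v$, and conversely any such $\pi$ yields the non‑negative prices $p(\{u,v\})=\pi(v)-\pi(u)$. Since every path is $I_j=[\br,t_j]$ we have $p(I_j)=\pi(t_j)$, so the task becomes: choose a root‑to‑leaf non‑decreasing $\pi$ with $\pi(\br)=0$ maximizing $\sum_{j:\,\pi(t_j)\le B(I_j)}\pi(t_j)$.

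Next I would show there is an optimal $\pi$ taking values only in $\cB:=\{0\}\cup\{B(I_j):j\in[m]\}$, a set of size at most $m+1$. Start from any optimal $\pi$; while some attained value $v_a$ lies outside $\cB$, let $S=\{w\in V:\pi(w)=v_a\}$ and raise $\pi$ by a common amount $\delta\ge 0$ on all of $S$. Root‑to‑leaf monotonicity is maintained as long as $v_a+\delta$ does not exceed the next larger value attained by $\pi$ (any parent of a node of $S$ that is outside $S$ has potential $<v_a$, and any child outside $S$ has potential $>v_a$, hence at least that next attained value). The objective does not decrease while $\delta$ is small: customers with $t_j\notin S$ are untouched, and a served customer with $t_j\in S$ satisfies $B(I_j)>v_a$ strictly (since $B(I_j)\in\cB$ while $v_a\notin\cB$), so it stays served and its payment only grows. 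Increase $\delta$ until $v_a+\delta$ first meets another attained value or a value of $\cB$; in either case the number of attained values lying outside $\cB$ strictly decreases, so after at most $|V|$ iterations all attained values lie in $\cB$.

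For $v\in V$ and $\beta\in\cB$, let $f(v,\beta)$ be the maximum revenue obtainable from the customers whose endpoint lies in the subtree $\bT(v)$, assuming $\pi(v)=\beta$ and all potentials drawn from $\cB$; and let $R(v,\beta)=\beta\cdot|\{j:t_j=v,\ B(I_j)\ge\beta\}|$ be the revenue from customers ending exactly at $v$, who are served for free precisely when their budget is at least $\beta$. Once $\pi(v)$ is fixed the subtrees hanging off $v$ become independent subproblems, which gives
\[
f(v,\beta)\;=\;R(v,\beta)\;+\sum_{u\text{ child of }v}\ \max_{\beta'\in\cB,\ \beta'\ge\beta}f(u,\beta'),
\]
with the sum empty at leaves, and the optimum equal to $f(\br,0)$. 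After sorting $\cB$ once, each inner maximum is a suffix maximum of the row $f(u,\cdot)$ and is computed in $O(|\cB|)$ time, so the whole table is filled in $O(nm)$ time (plus $O(m\log m)$ preprocessing), which is polynomial because $n\le 3m$. Correctness is the usual optimal‑substructure induction combined with the value‑restriction step.

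The reformulation and the dynamic program are routine; the step that needs care is the value‑restriction lemma, in particular verifying that shifting an entire level set of $\pi$ at once preserves root‑to‑leaf monotonicity and never lowers the revenue, and dealing with the degenerate cases in which two level sets merge or the shift exactly hits a budget threshold.
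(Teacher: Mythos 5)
The paper does not prove this statement at all---it is quoted from Guruswami et al.~\cite{GHKKKM05}---so there is no internal proof to compare against; your argument is essentially the standard one for the rooted case (restrict cumulative prices to budget values, then a bottom-up dynamic program over the tree), and both the value-restriction lemma and the recurrence $f(v,\beta)=R(v,\beta)+\sum_u\max_{\beta'\ge\beta}f(u,\beta')$ are correct, with the right $O(nm)$ accounting. The only real corner case in your level-shifting step is when the offending value $v_a$ exceeds every budget and is also the largest attained value, since then ``raise until you meet another attained value or a value of $\cB$'' has nothing to stop at; this is repaired in one line by first replacing $\pi$ with $\min(\pi,\max\cB)$, which preserves root-to-leaf monotonicity and never decreases revenue, after which every value outside $\cB$ has a budget value strictly above it and your induction goes through.
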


\medskip
 
For $i=1,\ldots,k,$ let 
$$\cI(i)=\{I\in\cI:~i \mbox{ is the smallest level of a separator vertex $v\in\cN(\bT)$ contained in } I\}.$$  
Then $\cI=\cup_{i\in[k]}\cI(i)$ and $I\cap J=\emptyset$ for all $I, J \in\cI(i)$ that contain distinct separators at level $i$.
Let $(\OPT,p^*)$ be an optimal solution. Then, $p^*(\OPT)=\sum_{i=1}^kp^*(\OPT\cap\cI(i))$. Thus if we solve $k$ independent problems on each of the sets $\cI(i)$, $i=1,\ldots,k$, and take the solution with maximum revenue, we get a solution of value at least $p^*(\OPT)/k$.
Thus it remains to show the following result.

\begin{theorem}\label{t5}
Let $v$ be a node of $\bT$, and suppose that all the paths in $\cI$ go through $v$. Then a solution $(\cJ,p)$ of expected value $p(\cJ)\geq p^*(\OPT)/8$ can be found in polynomial time. 
\end{theorem}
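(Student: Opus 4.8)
The plan is to reduce the case ``all paths pass through $v$'' to two instances of the \emph{rooted} tollbooth problem, each solvable exactly by Theorem~\ref{t3}, while losing only a constant factor in expectation through a random split of the subtrees hanging off $v$.

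First I would record the structure of paths through $v$. Writing $T_1,\dots,T_r$ for the components of $\bT-v$ (one per neighbour of $v$), every $I\in\cI$ either has $v$ as an endpoint and lies entirely in one component, or has $v$ in its interior and splits into two ``halves'' $I_L$ and $I_R$ lying in two distinct components and joined at $v$. In the latter case $p^*(I)=p^*(I_L)+p^*(I_R)$, so one half --- the \emph{heavy half} $h(I)$ --- has $p^*(h(I))\ge p^*(I)/2$; in the former case put $h(I)=I$, so $p^*(h(I))\ge p^*(I)/2$ as well. The algorithm then: colour each $T_i$ with an independent fair coin $c_i\in\{0,1\}$, pick an independent fair coin $\beta\in\{0,1\}$, let $\bT_\beta$ be the subtree induced by $v$ and the colour-$\beta$ components, and let $\cI_\beta$ consist of the paths that meet exactly one colour-$\beta$ component. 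For $I\in\cI_\beta$ the part $I|_{\bT_\beta}$ of $I$ inside $\bT_\beta$ is a path with endpoint $v$, so $\bT_\beta$ together with $\{I|_{\bT_\beta}:I\in\cI_\beta\}$ and budgets $B(I)$ is a rooted instance; solve it exactly (Theorem~\ref{t3}) to get prices $p'$ on the edges of $\bT_\beta$ and a set of sold paths, and output the pricing equal to $p'$ on the edges of $\bT_\beta$ and $0$ elsewhere, selling the corresponding original paths.

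Two things then need to be checked. Feasibility and value of the lift: for $I\in\cI_\beta$ every edge of $I$ outside $\bT_\beta$ is priced $0$, so $p(I)=p'(I|_{\bT_\beta})\le B(I)$, and the revenue collected equals the optimum of the rooted instance. The approximation bound: fix $I\in\OPT$ and let $\gamma$ be the component containing $h(I)$. With probability $\ge\tfrac12$ the component of the other half of $I$ (if it has one) receives a colour different from $c_\gamma$, and independently with probability $\tfrac12$ we have $\beta=c_\gamma$; on this event $I\in\cI_\beta$ and $I|_{\bT_\beta}=h(I)$, so the rooted instance admits the feasible solution ``price $\bT_\beta$ by $p^*$ and sell all such $h(I)$'', which by itself has value $\sum p^*(h(I))$ over the $I$ satisfying the event. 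Hence the algorithm's expected revenue is at least $\sum_{I\in\OPT}\tfrac14\cdot\tfrac{p^*(I)}{2}=p^*(\OPT)/8$, and everything runs in polynomial time by Theorem~\ref{t3}.

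The step I expect to be delicate is the reduction itself: verifying that the random two-colouring simultaneously makes every surviving path one-sided on the chosen colour class and pushes its complementary part into the zero-priced region, so that an \emph{exact} optimum of the rooted sub-instance lifts back to a \emph{feasible} global pricing of exactly the same value; once this is in place, the rooted dynamic program of Theorem~\ref{t3} and the factor-$2$ loss for charging to the heavy half are routine. (Derandomization is only needed for the deterministic statement of Theorem~\ref{t1}, not for the expected-value claim here, and would be treated separately.)
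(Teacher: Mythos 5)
Your proof is correct and follows essentially the same route as the paper's: a uniformly random subset of the components around $v$ (your colouring $c_i$ together with $\beta$ is just a re-parametrization of the paper's random set $X$ of neighbours picked with probability $1/2$), retention of the paths meeting exactly one chosen component, truncation to the half inside it, an exact solve of the resulting rooted instance via Theorem~\ref{t3}, and a zero extension, with the same $\tfrac12\cdot\tfrac14$ accounting via the heavy half. No gaps; if anything, your explicit feasibility check of the lifted pricing and of the witness solution is slightly more careful than the paper's write-up.
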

\proof
Let $v_1,\ldots,v_r$ be the nodes adjacent to $v$. Note that each path $I\in\cI$ can be divided into two sub-paths starting at $v$; we denote them by $I_1$ and $I_2$. We use the following procedure. 
\begin{enumerate}
\item Let $X\subseteq\{v_1,\ldots,v_r\}$ be a subset obtained by picking each $v_i$ randomly and independently with probability $1/2$. 

\item Let $\cI'=\{I_j\in\cI:~j\in\{1,2\},~I_j\mbox{ contains exactly one vertex of } X\}$.

\item Use dynamic programming (cf. Theorem \ref{t3}) to get an optimal solution $(\cJ,p)$ on the instance defined by $\cI'$ and the tree $T'$ with root $v$ and sub-trees rooted at the children in $X$.

\item Extend $p$ with zeros on all the other arcs not in $T'$, and return $(\cJ,p)$.  
\end{enumerate}
Let $(\OPT,p^*)$ be an optimal solution. We now argue that the solution returned by this algorithm has expected revenue of $p^*(\OPT)/8$.
Clearly, for every $I\in\cI$, either $p^*(I_1)\geq p^*(I)/2$ or $p^*(I_2)\geq p^*(I)/2$; let us call this more profitable part by $I_*$. Then $\sum_{I\in\OPT}p^*(I_*)\geq p^*(\OPT)/2$. Let $\OPT'=\{I\in\OPT:~I_*\mbox{ contains exactly}$
$\mbox{ one vertex of } X\}$. Note that with probability at least $1/4$ each $I\in\OPT$ has $I_*$ intersecting the random set $X$ in exactly one vertex. In particular, 
$$\EE[p^*(\OPT')]=\sum_{I\in\OPT}\EE[p^*(I_*)]\geq \frac{1}{4}\sum_{I\in\OPT}p^*(I_*)\geq \frac{1}{8}p^*(\OPT).$$
Since what our procedure returns is at least as profitable as this quantity, the theorem follows.
\qed

The randomized algorithm above can be derandomized using the method of {\em pairwise independence} \cite{LW05,MR95,BB06}. 

\section{Uncrossing paths}\label{s4}
Here we assume that the tree is rooted at some node $\br\in V$, and that paths in $\cI$ have the following \emph{uncrossing} property:
If $I=[s,t]\in\cI$ then $t$ lies on the path $[s,\br]$. This property implies that once paths in $\cI$ meet they cannot diverge.  

In the course of the solution, we shall consider the following generalized version of the problem: Given intervals as above, and also a function $h:\cI\mapsto\RR_+$, find $\cJ\subseteq\cI$ and a pricing $p:E\mapsto\RR_+$, satisfying \raf{bud-cons} and maximizing $\sum_{I\in\cJ}h(I,p)$.

\medskip

Given a price function $p:E\mapsto\RR_+$ and a node $w\in V$, the \emph{accumulative price} at any node $u$ \emph{on the path} $[w,\br]$ with respect to $w$ is defined as  $p([w,u])$. Obviously, this monotonically increases as $u$ moves towards the root. In this section we prove the following theorem.

\begin{theorem}\label{t2}
There is a quasi-polynomial time approximation scheme for the tollbooth problem with uncrossing paths. 
\end{theorem}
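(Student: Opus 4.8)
The plan is to adapt the quasi-PTAS of \cite{ESZ07} for the highway problem, replacing the ``midpoint of an interval'' used there by a tree separator (Proposition~\ref{p2}) and exploiting the uncrossing property to replace the two-sided price guesses of \cite{ESZ07} by \emph{one-sided} guesses. First I would discretize: by Proposition~\ref{p1} it suffices to search over $\epsilon$-optimal price functions, i.e.\ integral prices in $\{0,1,\dots,P\}$ with $P=nm/\epsilon$, losing only a factor $1-2\epsilon$ and keeping $\tilde p(I)\le B(I)/(1+\epsilon)$ for every $I\in\OPT$; in particular every accumulated price lies in the polynomial-size set $\{0,1,\dots,nP\}$ and can be guessed exactly from a polynomial domain. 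Next, root $\bT$ at $\br$ and apply Proposition~\ref{p2} recursively to obtain $\cN(\bT)$ with $k=O(\log n)$ levels, i.e.\ a decomposition tree $D$ of depth $k$ whose nodes are ``pieces'' of $\bT$; each piece $\Pi$ carries its separator $v(\Pi)$, and deleting $v(\Pi)$ splits $\Pi$ into its child pieces. I would assign each customer path $I$ to the unique piece $\Pi(I)$ whose separator is the shallowest separator lying on $I$; then $I\subseteq\Pi(I)$, and by the uncrossing property $I$ enters $\Pi(I)$ from one descendant piece, runs through $v(\Pi(I))$, and leaves towards the root through another descendant piece.

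The algorithm is then a recursion over $D$. A sub-call has a state $(\Pi,\beta)$, where $\beta$ records, for each separator on the path from $\Pi$ towards $\br$ that is relevant to some customer path leaving $\Pi$ upward, the accumulated price of the sought solution at that node measured from the root. Here the uncrossing property is essential: all customer paths leaving $\Pi$ towards the root share one common vertical tail, so their prices above $\Pi$ are read off from a \emph{single} monotone accumulated-price function (one value per relevant node), instead of the two-directional profile that would be needed on general trees, or on the line where \cite{ESZ07} must guess on both sides of the split. Processing $(\Pi,\beta)$: (i) guess the accumulated price at $v(\Pi)$ from the polynomial domain, which together with $\beta$ fixes the states of the child pieces; (ii) run a dynamic program of the type used for rooted paths in Theorem~\ref{t3} to price the edges at $v(\Pi)$ and select among the paths assigned to $\Pi$, reading the prices of their below-$v(\Pi)$ and above-$v(\Pi)$ parts from the corresponding child states; (iii) recurse on the child pieces and add up the revenues. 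The output is the best solution over all guesses made at the root piece.

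For correctness, the $\epsilon$-optimal price function is consistent with exactly one sequence of guesses, so the algorithm returns revenue at least $(1-2\epsilon)p^*(\OPT)$; the $B(I)/(1+\epsilon)$ slack of Proposition~\ref{p1} absorbs any error introduced by forcing the per-piece DP to be consistent with the guessed accumulated prices, giving $(1-O(\epsilon))$-optimality after rescaling $\epsilon$. For the running time, along any root-to-leaf branch of $D$ there are $O(\log n)$ levels, and the state together with the guess at the current separator amounts to $O(\log n)$ numbers, each from a range of size $O(nP)=\poly(n,m,1/\epsilon)$; hence there are at most $(nP)^{O(\log n)}$ states, and since each per-piece DP is polynomial the whole algorithm runs in quasi-polynomial time.

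The main obstacle — and the place where adapting \cite{ESZ07} to trees takes real work — is to show that a profile of only $O(\log n)$ one-sided accumulated values per piece is simultaneously (a) small enough to keep the overall number of guesses poly-logarithmic and (b) rich enough to recover the price, and the budget feasibility, of \emph{every} path assigned to the piece, including those whose upper endpoint lies strictly between two relevant separators. This is exactly what the separator theorem is used for — bounding, per piece, the number of ``breakpoints'' that a crossing path can see — together with the uncrossing property, which makes the upward tails nested so that a single guessed value per breakpoint suffices. One must also charge each path to exactly one piece so that its revenue is counted once, and design the per-piece DP so that its prices are consistent with the guessed accumulated prices both below and above the separator.
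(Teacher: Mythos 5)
Your high-level plan matches the paper's: discretize via Proposition~\ref{p1}, decompose recursively via the separator theorem, exploit the uncrossing property to make the guesses one-sided, charge each crossing path to the subproblem below the separator, and solve base cases by a rooted-path DP. However, there is a genuine gap at exactly the point you flag as ``the main obstacle,'' and the fix you gesture at does not work.

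Your state $\beta$ records only the \emph{exact} accumulated prices at the $O(\log n)$ separators above the current piece. That is not enough information to determine, even approximately, the price $p([v(\Pi),t])$ of the upper tail of a crossing path $I=[s,t]$ whose endpoint $t$ is not a separator; and without that price the lower subproblem can neither credit $I$'s revenue nor test its budget constraint, so the two child subproblems remain coupled. You claim the separator theorem bounds ``the number of breakpoints that a crossing path can see,'' but the separator theorem only bounds the number of \emph{separators} on the upward path; the number of distinct upper endpoints $t$ among the paths crossing a given separator can be $\Theta(n)$, and guessing exact accumulated prices at all of them would blow the state space up to $(nP)^{\Theta(n)}$. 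The paper's resolution is a different guessing object, the $\epsilon$-relative pricing (Definition~\ref{d2}): in addition to the exact values at the $O(\log n)$ separators (your $\beta$, the paper's condition (C2)), one guesses the $K=O(\log(nP)/\log(1+\epsilon))$ nodes $u_1,\ldots,u_k$ at which the one-sided accumulated price from $w$ first exceeds successive powers of $(1+\epsilon)$ (condition (C1)). By monotonicity of the accumulated price along the upward path, these $K$ breakpoints determine $p([w,t])$ within a factor $(1+\epsilon)$ for \emph{every} crossing path, wherever $t$ lies (inequality \raf{ext}). The decoupling is then done by truncating each crossing path at $w$, reducing its budget by $(1+\epsilon)v(I,R)$ and crediting $v(I,R)$ to an auxiliary value function $h$, so that feasibility and revenue are both controlled with only a $(1+\epsilon)$ loss that the slack from Proposition~\ref{p1} absorbs. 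This logarithmic price-level guessing and the accompanying budget/$h$ bookkeeping are the core of the argument and are missing from your proposal; supplying them (and the induction of Lemma~\ref{l3} that verifies the losses do not compound across the $O(\log n)$ recursion levels) is what is needed to complete the proof.
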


In the following, we fix $K=\lceil\log (nP)/\log(1+\epsilon)\rceil$.

\begin{definition}\label{d2}\emph{($\epsilon$-Relative pricings)}
Let $w\in V$ be a given node of $\bT$, and $0\leq k\leq K$ and $0\leq k'\leq 2\log_{3/2} n$ be given integers. 
We call any selection of $k$ nodes $u_1,\ldots,u_{k}\in V$, $k$ indices $-\infty\leq i_1<\cdots<i_{k}\le K$, and $k'$ values $p_1,\ldots,p_{k'}\in\{0,1,\ldots,nP\}$, such that $w,u_1,u_2,\ldots,u_{k},\br$ lie on the path $[w,\br]$ in that order,
an \emph{$\epsilon$-relative pricing} w.r.t. $w$, and denote it by $(w,k,k',u_1,\ldots,u_{k},i_1,\ldots,i_{k},p_1,\ldots,p_{k'})$.
\end{definition}
The total number of possible $\epsilon$-relative pricings with respect to a given $w\in V$ is at most 
\begin{equation}\label{bd}
L=(d(T)K)^{K}(nP+1)^{2\log_{3/2}n},  
\end{equation}
which is $m^{\polylog(m)}$ for every fixed $\epsilon>0$.

\medskip

\begin{definition}\label{d4}\emph{(Consistent pricings)} 
Let $R=(w,k,k',u_1,\ldots,u_{k},i_1,\ldots,i_k,p_1,\ldots,p_{k'})$ be an $\epsilon$-relative pricing w.r.t. node $w\in V$, $\cL=\{s_1,\ldots,s_{k'}\}$ be the set of separators from $\cN(\bT)$ on the path from $(w,\br]$, and $p:E\mapsto\RR_+$ be a pricing of  $E$. We say that $R$ is $\epsilon$-\emph{consistent} with $p$ and $\cL$ if 
\begin{enumerate}
\item[(C1)]  for $j=1,\ldots,k-1$, $(1+\epsilon)^{i_{j}}\leq p([w,u])\leq(1+\epsilon)^{i_{j}+1}$ if $u$ lies in the interval $[u_j,u_{j+1})$ (excluding $u_{j+1}$),
\item[(C2)]  for $j=1,\ldots,k'$, $p([w,s_j])=p_j$.
\end{enumerate}
\end{definition}

\medskip

\begin{lemma}\label{l1}
Let $\tilde{p}:E\mapsto\RR_+$ be an $\epsilon$-optimal pricing for a given instance of $\TB$, $w\in V$ be an arbitrary node, and $\cL=\{s_1,\ldots,s_{k'}\}$ be the set of separators in $\cN(\bT)$ on the path from $(w,\br]$. Then there exists an $\epsilon$-relative pricing $R$ w.r.t. $w$, that is $\epsilon$-consistent with $\tilde{p}$ and $\cL$. 
\end{lemma}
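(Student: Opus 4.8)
The plan is to construct the required relative pricing $R=(w,k,k',u_1,\ldots,u_k,i_1,\ldots,i_k,p_1,\ldots,p_{k'})$ explicitly from $\tilde p$, handling the two conditions of Definition~\ref{d4} separately. Condition (C2) is the easy half: for each separator $s_j\in\cL$ lying on $(w,\br]$ I will simply take $p_j:=\tilde p([w,s_j])$. By Proposition~\ref{p1}(i) every edge price is an integer in $\{0,\ldots,P\}$, and the path $[w,s_j]$ has at most $n$ edges, so each $p_j$ is a non-negative integer not exceeding $nP$ and hence a legitimate coordinate of a relative pricing; the number of these separators, $k'=|\cL|$, is at most $2\log_{3/2}n$ because on a root-path the separator decomposition from Proposition~\ref{p2} contributes only polylogarithmically many separators, exactly as required by Definition~\ref{d2}.

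For (C1) the one structural fact I will use is the monotonicity of the accumulative price: as $u$ moves from $w$ towards $\br$, $a(u):=\tilde p([w,u])$ is non-decreasing (each extra edge is priced non-negatively), with $a(w)=0$ and $a(u)\le nP$ everywhere. To each node $u$ on $[w,\br]$ with $a(u)\ge 1$ I assign the \emph{scale} $\lambda(u):=\lfloor\log_{1+\epsilon}a(u)\rfloor$, so that $(1+\epsilon)^{\lambda(u)}\le a(u)<(1+\epsilon)^{\lambda(u)+1}$ and $0\le\lambda(u)\le K$. Since $a$ is non-decreasing so is $\lambda$, hence the nodes of any fixed scale form a contiguous stretch of $[w,\br]$. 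Let $i_1<i_2<\cdots<i_k$ be the distinct scales that actually occur (at most $K$ of them; $k=0$ if $a\equiv 0$ on the path) and let $u_j$ be the \emph{first} node of scale $i_j$; then $w,u_1,\ldots,u_k,\br$ occur in this order, and the initial stretch $[w,u_1)$, on which $a=0$, is not constrained by (C1) at all.

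It then remains to verify (C1) from this construction: for $1\le j\le k-1$ and $u\in[u_j,u_{j+1})$, monotonicity of $\lambda$ gives $\lambda(u)\ge i_j$, while $u_{j+1}$ being the first node of scale $i_{j+1}$ gives $\lambda(u)<i_{j+1}$; as $\lambda$ only takes values in $\{i_1,\ldots,i_k\}$ this forces $\lambda(u)=i_j$ and hence $(1+\epsilon)^{i_j}\le a(u)<(1+\epsilon)^{i_j+1}$, which is (C1). I do not expect a genuine obstacle here: Proposition~\ref{p1} has already done the substantive work of producing a price function with the required integrality and magnitude, and Lemma~\ref{l1} merely records that any such function can be ``snapped'' onto the grid of geometric scales (for C1) and exact separator values (for C2) that underlies the definition of a relative pricing. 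The only points needing care are the two counting bounds, $k\le K$ (from $a(u)\le nP$, so that at most $K$ geometric scales can occur) and $k'\le 2\log_{3/2}n$ (from the separator decomposition), and both are already folded into Definition~\ref{d2}.
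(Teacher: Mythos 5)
Your construction is correct and is essentially the proof the paper gives: both bucket the (monotone, integral) accumulative prices $\tilde p([w,u])$ into powers of $(1+\epsilon)$, take the $u_j$ to be the breakpoints where the bucket index first increases, let the $i_j$ be the corresponding exponents, and set $p_j=\tilde p([w,s_j])$ for the separators, with $k\le K$ from $\tilde p([w,u])\le nP$ and $k'\le 2\log_{3/2}n$ from the separator decomposition. Your write-up just makes the monotonicity argument and the verification of (C1) more explicit than the paper does.
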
  

\medskip

With every $\epsilon$-relative pricing $R$, we can associate a system of linear inequalities, denoted by $S(R)$, on a set of $E$ variables $\{p(e):~e\in E\}$, consisting of the constraints $(C1)$ and $(C2)$, together with the non-negativity constraints $p(e)\geq 0$. The feasible set for this system gives the set of all possible pricings with which $R$ is $\epsilon$-consistent. For two systems of inequalities $S_1,S_2$, we denote by $S_1\wedge S_2$ the system obtained by combining their inequalities.

\medskip

Let $R=(w,k,k',u_1,\ldots,u_{k},i_1,\ldots,i_k,i'_1,\ldots,i'_{k'})$ be an $\epsilon$-relative pricing w.r.t. a node $w\in V$. Given an interval $I\in\cI[w]$, we associate a value $v(I,R)$ to $I$, defined with respect to $R$ as follows: Let $j(I)$ be the largest index such that $u_{i_{j(I)}}$ is contained in $I$.  
Then, define 
$
v(I,R)=
(1+\epsilon)^{j(I)}.
$
For a subset of intervals $\cI'\subseteq\cI$, we define, as usual, $v(\cI',R)=\sum_{I\in\cI'}v(I,R)$. 
It follows that for any $\epsilon$-relative pricing $R$ w.r.t. a node $w\in V$, any $p:E\mapsto\RR_+$ with which $R$ is consistent, and any $I=[s,t]\in\cI[w]$, we have
\begin{equation}\label{ext}
v(I,R)\leq p([w,t])\leq (1+\epsilon)v(I,R). 
\end{equation}

\medskip

\noindent{\bf Decomposition into two subproblems.}~ Let $w\in\cN(\bT)$ be a separator node. Then $\bT$ can be decomposed into two subtrees
$\bT_L=(V_L,E_L)$ and $\bT_R=(V_R,E_R)$, such that the root $\br\in V_R$ and $w$ is the root of $\bT_L$. %(see Figure \ref{f2}). 
We define two $\TB$ instances $(\bT_L,\cI_L)$ and $(\bT_R,\cI_R)$ where:
\begin{eqnarray*}
\cI_0&=&\{[s,t]\in\cI[w]~:~s\in V_L \mbox{ and }t\in V_R\},\\
\cI_L&=&\{[s,t]\in\cI~:~s,t\in V_L\}\cup\{[s,w]~:~[s,t]\in\cI_0\},\\
\cI_R&=&\{[s,t]\in\cI~:~s,t\in V_R\}.
\end{eqnarray*}
In other words, the intervals passing through $w$, crossing from $\bT_L$ to $\bT_R$ are truncated in $\bT_L$ while all other intervals remain the same\footnote{throughout, we will make the implicit assumption that each interval has an "identity"; so, for instance, $\cI_L\cap \cI_0$ will be used to denote the set $\{I\in\cI_0:~I=[s,t]\mbox{ and }[s,w]\in\cI_L\}$}. Note that from the choice of $w$, we have $\max\{|V(\bT_L)|,|V(\bT_R)|\}\le \frac{2n}{3}+1$, and both instances $(\bT_L,\cI_L)$ and $(\bT_R,\cI_R)$ are of the uncrossing type, with roots $w$ and $\br$, respectively.   

\medskip

The algorithm is shown in Figure \ref{f1}. It is initially called with an empty $\cS$, and with $h(I)=0$ for all $I\in\cI$. The procedure iterates over all $\epsilon$-relative pricings $R$, consistent with $\cS$, w.r.t. the middle edge $e^*,$ then recurses on the subsets of intervals to the left and right of $e^*$. Intervals crossing from $\bT_L$ to $\bT_R$ will be truncated 
and their values will be charged to $\bT_L$; hence the corresponding budgets are reduced, and the corresponding $h$-values are increased. 

\medskip

\noindent {\bf Solving the base case}.~ At the lowest level of recursion (either line 1 or 4), we have to solve a linear program defined by the system $\cS$. Note that the system may contain constraints on variables outside the current set of edges $E$ of the current tree $\bT$ (resulting from previous nodes of the recursion tree). However, we can reduce this LP to one that involves only variables in $E$. Indeed, any constraint that involves a variable not in $E$, has the form $L\leq p([w,u])\leq U$, where $u\in V(\bT)$, and $w\not\in V(\bT)$ is a separator node such that there is another separator node $w'\in V(\bT)$ on the path from $w$ to $u$. Then when $w'$ was considered in the recursion, a constraint of the form $p([w,w')]=q$, for some value $q$, was appended to $\cS$ (recall $(C2)$ in the definition of consistent pricings). Now, we can replace the first constraint by the equivalent constraint $L-q\leq p([w',u])\leq U-q,$ which only involves variables from $E$. This is exactly what procedure REDUCE$(\cS,\cdot)$ does in lines 2 and 6.

\medskip

When the procedure returns, we get a pricing $p:E\mapsto\RR_+$ and a set of intervals $\cJ\subseteq\cI$ which can be purchased under this pricing.

\begin{figure}
{\small
\begin{code}
\noindent{\bf Algorithm $\TB(\bT,\cI,\br,B,h,\cS)$:}\\
{\it Input:} An uncrossing $\TB$ instance $(\bT=(V,E),\cI)$ with root $\br$, \\
\>\> budgets and values $B,h:\cI\mapsto\RR_+$, and a feasible system of inequalities $\cS$\\
{\it Output:} A pricing $p:E\mapsto\RR_+$ and a subset $\cJ\subseteq\cI$\\\\

1.\>\> \If $|\cI|=0$, \Then \\
2.\>\>\> $\cS'\leftarrow$REDUCE$(\cS,E)$\\
3.\>\>\> \Return $(p,\emptyset)$, where $p$ is any feasible solution of $\cS'$\\
4.\>\> \If $d(\bT)=1$, \Then\\
5.\>\>\> \Foreach edge $e$ of $\bT$ \Do\\
6.\>\>\>\> $\cS'\leftarrow$REDUCE$(\cS,\{e\})$\\
7.\>\>\>\> $p(e)\leftarrow\argmax\{\sum_{I\in\cI:~p'\leq B(I)}(h(I)+p')~:~p'\mbox{ satisfies }\cS'\}$\\
8.\>\>\>\> $\cJ(e)\leftarrow\{I\in\cI:B(I)\geq p(e)\}$\\
9.\>\>\> \Return $((p(e):e\in E),\bigcup_{e\in E}\cJ(e))$\\     
10.\>\> let $w$ be a separator node of $\bT$ and $\bT_L,\bT_R,\cI_0,\cI_L,\cI_R$ be as defined above\\
11.\>\> \For every $\epsilon$-relative pricing $R$ w.r.t. $w$ for which $S\wedge S(R)$ is feasible \Do\\
12.\>\>\> \Foreach $I\in\cI_0$ \Do\\
13.\>\>\>\> $B(I)\leftarrow B(I)-(1+\epsilon)v(I,R)$\\
14.\>\>\>\> $h(I)\leftarrow h(I)+v(I,R)$\\ 
15.\>\>\> $(p_1,\cJ_1)\leftarrow$ $\TB(\bT_L,\cI_L,w,B,h,\cS)$\\
16.\>\>\> $(p_2,\cJ_2)\leftarrow$ $\TB(\bT_R,\cI_R,\br,B,h,\cS\wedge S(R))$\\
17.\>\>\> let $p$ be the pricing defined by $p(e)=p_1(e)$ if $e\in E_L$ and $p(e)=p_1(e)$ if $e\in E_R$ \\
18.\>\>\> $\cJ\leftarrow\cJ_1\cup\cJ_2$\\
19.\>\>\> record $(p,\cJ)$\\
20.\>\>   \Return the recorded solution with largest $p(\cJ)+h(\cJ)$ value\\
\end{code}
}

\caption{The procedure for computing $\epsilon$-approximate prices.}
\label{f1}
\end{figure}

Theorem \ref{t2} follows from the following two lemmas.

\begin{lemma}\label{l2}
Algorithm $\TB$ runs in quasi-polynomial time in $m$, for any fixed $\epsilon>0$.
\end{lemma}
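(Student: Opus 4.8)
The plan is to bound separately the number of nodes of the recursion tree of Algorithm $\TB$ and the work performed at each node, and then multiply the two estimates. The starting observation is that every recursive call in lines 15--16 is made on one of the two subtrees $\bT_L,\bT_R$ obtained from the separator decomposition of $\bT$: by the choice of $w\in\cN(\bT)$ (Proposition \ref{p2}) each of these subtrees has at most $\frac{2n}{3}+1$ vertices. Iterating the recurrence $f(n)=\frac{2n}{3}+1$, whose fixed point is $n=3$, shows that after $O(\log n)$ levels the current subtree has $O(1)$ vertices, so either $|\cI|=0$ or $d(\bT)=1$ and the recursion bottoms out in a base case (lines 1--3 or 4--9). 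Hence the recursion tree has depth $O(\log n)=O(\log m)$, using $n\leq 3m$.

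Next I would bound the branching. At an internal node the procedure loops over all $\epsilon$-relative pricings $R$ w.r.t. the separator $w$ for which $\cS\wedge S(R)$ is feasible, spawning two recursive calls for each. By $\raf{bd}$ the number of such pricings w.r.t. a fixed node is at most $L=(d(T)K)^{K}(nP+1)^{2\log_{3/2}n}$; since $P=nm/\epsilon$, so that $nP=n^2m/\epsilon$ and $K=O(\log(nm)/\log(1+\epsilon))=O(\log m)$ for fixed $\epsilon$, and $n\leq 3m$, this is $L=m^{\polylog(m)}$ (as already noted after $\raf{bd}$). Thus each internal node has at most $2L$ children, and the recursion tree has at most $(2L)^{O(\log m)}=m^{\polylog(m)}$ nodes in total.

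Then I would bound the work at a single recursion node, not counting the recursive calls themselves. Finding the separator costs linear time (Proposition \ref{p2}); enumerating the $\epsilon$-relative pricings and, for each, testing feasibility of $\cS\wedge S(R)$ by linear programming and updating $B(I),h(I)$ for $I\in\cI_0$ in lines 12--14 costs $L$ times a polynomial in $m$; and each base case requires either solving one LP (line 3) or, for each of the at most $n$ edges, maximizing the function in line 7. That function is piecewise linear in the single variable $p'$ with breakpoints among the $O(m)$ budgets $B(I)$ and is increasing on each piece, so its maximum over the feasible interval for $p'$ described by $\cS'$ is attained at a breakpoint or at an endpoint of that interval, hence is found in polynomial time. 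Here one must check that the system $\cS$ does not grow too fast as we descend: along any root-to-node path at most $O(\log m)$ systems $S(R)$ have been conjoined into $\cS$, each contributing only $O(K+\log n)=\polylog(m)$ constraints beyond non-negativity, so $\cS$ always has $\poly(m)$ constraints and REDUCE together with the LPs it feeds stay of polynomial size. Hence the per-node work is $L\cdot\poly(m)=m^{\polylog(m)}$.

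Multiplying, the total running time is $m^{\polylog(m)}\cdot m^{\polylog(m)}=m^{\polylog(m)}$, i.e.\ quasi-polynomial for every fixed $\epsilon>0$. The step I expect to be the main obstacle is precisely this last bookkeeping: handling constraints in $\cS$ that were inherited from ancestors of a recursion node and that may mention edges absent from the current subtree. This is what REDUCE is for, and one has to verify that it rewrites every such constraint $L\le p([w,u])\le U$ into an equivalent constraint over the current edge set, using the equality $p([w,w')]=q$ recorded at the intermediate separator $w'$, without increasing the number of constraints, so that the LPs remain polynomial-size and the counting above goes through.
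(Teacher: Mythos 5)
Your argument is correct and is essentially the paper's own proof in expanded form: the paper simply writes the recurrence $T(n)\leq \poly(n,m)+2L\cdot T(2n/3)$ and solves it to get $T(n)\leq L^{O(\log n)}\poly(m)$, which is exactly your depth-times-branching-times-per-node-work accounting. Your extra care about the growth of $\cS$ and the REDUCE step is a reasonable addition that the paper handles only implicitly in its discussion of the base case.
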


\begin{lemma}\label{l3}
For any $\epsilon>0$, Algorithm $\TB$ returns a pricing $p$ and a set of intervals $\cJ$ such that $p(I)\le B(I)$ for all $I\in\cJ$ and $p(\cJ)\geq (1-3\epsilon)p^*(\OPT)$.
\end{lemma}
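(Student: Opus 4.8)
The plan is to prove both halves of the statement — budget feasibility of the returned pair $(p,\cJ)$ and the revenue bound $p(\cJ)\ge(1-3\epsilon)p^*(\OPT)$ — by a single induction on the recursion tree of Algorithm $\TB$, comparing the run against the $\epsilon$-optimal pricing $\tilde p$ of Proposition \ref{p1} (recall $\tilde p(I)\le B(I)/(1+\epsilon)$ for $I\in\OPT$ and $\tilde p(\OPT)\ge(1-2\epsilon)p^*(\OPT)$). The bookkeeping device is this: as a call descends the recursion, an original interval $I^{\mathrm{orig}}=[s,t^{\mathrm{orig}}]\in\cI$ is truncated (lines 12--14) at a sequence of separators $t^{\mathrm{orig}}\succeq w_1\succeq\cdots\succeq w_q$ on the path to the current root, with relative pricings $R_1,\dots,R_q$, so that the version alive at a leaf of the recursion is the single edge $[s,w_q]$ (single because the recursion only stops when $d(\bT)=1$, where every uncrossing path is one edge); moreover lines 12--14 maintain $h([s,w_q])=\sum_{\ell}v(I_\ell,R_\ell)$ and $B([s,w_q])=B(I^{\mathrm{orig}})-(1+\epsilon)h([s,w_q])$, where $I_\ell$ is the version of $I^{\mathrm{orig}}$ just before its $\ell$-th truncation.

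\textit{Feasibility.} Fix any run and any $I^{\mathrm{orig}}$ whose truncated version $[s,w_q]$ is selected into $\cJ$. By line 16 every constraint $S(R_\ell)$ is imposed on the final pricing $p$, and — after the rewriting carried out by REDUCE across deeper separators, as explained in the paragraph ``Solving the base case'' — $p$ satisfies all of them, so $p$ is $\epsilon$-consistent with each $R_\ell$ in the sense of Definition \ref{d4}. Applying \raf{ext} to the removed segment $[w_\ell,w_{\ell-1}]$ and summing over $\ell$ (the segments are disjoint and telescope to $[w_q,t^{\mathrm{orig}}]$) gives $p([w_q,t^{\mathrm{orig}}])\le(1+\epsilon)h([s,w_q])$. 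Since line 8 ensures the selected edge $e=[s,w_q]$ has $p(e)\le B([s,w_q])$, we get $p(I^{\mathrm{orig}})=p([s,w_q])+p([w_q,t^{\mathrm{orig}}])\le B([s,w_q])+(1+\epsilon)h([s,w_q])=B(I^{\mathrm{orig}})$, which is \raf{bud-cons}. The other half of \raf{ext} gives $h([s,w_q])\le p([w_q,t^{\mathrm{orig}}])$, hence the true revenue obtainable by selling the original intervals of $\cJ$ under $p$ is $\sum p(I^{\mathrm{orig}})\ge\sum\big(p([s,w_q])+h([s,w_q])\big)$, i.e.\ it is at least the ``proxy'' quantity $p(\cJ)+h(\cJ)$ that the algorithm actually maximizes.

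\textit{Revenue.} It remains to lower bound the proxy. I would prove by induction up the recursion that, along the branch that at every node makes a relative-pricing guess $R$ that is $\epsilon$-consistent with $\tilde p$ (such an $R$ exists by Lemma \ref{l1}, and is enumerated in line 11 because $\tilde p$ then satisfies $\cS\wedge S(R)$), every call returns
\[
p(\cJ)+h(\cJ)\ \ge\ \sum_{I=[s,z]\in\OPT\cap\cI}\Big(h(I)+\tfrac{1}{1+\epsilon}\tilde p([s,z])\Big),
\]
where $z$ is the current endpoint of $I$ towards the root. In the base case $d(\bT)=1$, each such $I$ is a single edge $e=[s,z]$; the invariant on $(B,h)$ together with $h(I)\le\tilde p([z,t^{\mathrm{orig}}])$ (from \raf{ext}, as the $R_\ell$ are consistent with $\tilde p$) and $\tilde p(I^{\mathrm{orig}})\le B(I^{\mathrm{orig}})/(1+\epsilon)$ gives $\tilde p(e)\le B(I)$ by a short calculation, so $\tilde p(e)$ is a feasible price at line 7 and the argmax there collects at least $h(I)+\tilde p(e)\ge h(I)+\frac{1}{1+\epsilon}\tilde p([s,z])$ from $I$. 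In the inductive step, both recursive calls again lie on a branch consistent with $\tilde p$; the key monotonicity is that when a crossing interval $[s,t]$ is replaced by $[s,w]$ in $\cI_L$, its summand changes by $v([s,t],R)-\frac{1}{1+\epsilon}\tilde p([w,t])\ge0$ by \raf{ext}, while non-crossing intervals keep their summand and pass entirely to one side, so the two children's bounds add up to at least the parent's. Applied to the top call ($h\equiv0$, $\cS$ trivial) this yields $p(\cJ)+h(\cJ)\ge\frac{1}{1+\epsilon}\tilde p(\OPT)\ge\frac{1-2\epsilon}{1+\epsilon}p^*(\OPT)\ge(1-3\epsilon)p^*(\OPT)$, using $(1-3\epsilon)(1+\epsilon)\le1-2\epsilon$; together with the feasibility paragraph, $p(\cJ)\ge(1-3\epsilon)p^*(\OPT)$.

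\textit{Main difficulty.} I expect the delicate part to be the bookkeeping rather than any single inequality: precisely stating and propagating the invariant on $(B,h)$ for intervals truncated several times, telescoping the budget reductions and the $h$-credits against the corresponding disjoint price segments, and — for feasibility — verifying that the constraints $S(R_\ell)$, once REDUCE has rewritten them across the intermediate separators, really do constrain the accumulative prices of the \emph{final} pricing so that \raf{ext} is applicable. The one conceptual choice that makes the revenue induction close is to measure against $\frac{1}{1+\epsilon}\tilde p$ instead of $\tilde p$, so that each truncation becomes a non-loss: the credit $v(\cdot,R)$ moved into $h$ always over-pays the $\frac{1}{1+\epsilon}\tilde p$ charged on the segment it replaces.
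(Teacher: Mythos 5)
Your proof is correct and follows essentially the same route as the paper's: an induction over the recursion tree against the $\epsilon$-optimal pricing $\tilde p$, using Lemma \ref{l1} to guarantee a consistent guess is enumerated, \raf{ext} to show each truncation's $h$-credit over-pays the removed segment at the $\frac{1}{1+\epsilon}$ scale, and the budget reduction of line 13 plus \raf{ext} for feasibility. The only (cosmetic) differences are that you telescope the feasibility argument globally over all truncations of an interval rather than folding it into the level-by-level induction, and you state the revenue invariant directly for $\tilde p$ rather than for an arbitrary $p'$ satisfying $\cS$ as the paper does.
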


\section{Hardness of the highway problem}\label{HARD}
\subsection{Strong NP-hardness in the standard model}\label{StrongNP}
Recall that the highway problem is the special case of the tollbooth problem when the underlying graph is a path.
In \cite{GHKKKM05}, Guruswami, et al. considered the highway problem and gave a polynomial time
algorithm when the maximum budget is bounded by a constant, and all the budgets are integral. 
Balcan and Blum \cite{BB06} gave a constant factor
approximation algorithm when all intervals have the same length. 
Breist and Krysta \cite{BK06} showed that the problem is weakly NP-hard.
In \cite{GLSUT08}, Grigoriev et al. showed that
a restricted version of the problem when the prices are required to satisfy a {\em monotonicity} condition
remains weakly NP-hard. 
In this section, we show that the problem is strongly NP-hard by a reduction
from MAX-2-SAT.

Consider a MAX-2-SAT instance
with $n$ variables $\{x_1, \cdots, x_n\}$ and $m$ clauses $\{C_1, \cdots, C_m\}$.
Let the variables be numbered $1, \cdots, n$. We construct a gadget for each variable
and each clause. We start by describing the gadgets in our construction.

\subsubsection{Variable Gadget}
The variable gadget for each variable consists of two copies of the following {\em basic gadget} and
a {\em consistency gadget}. We first describe the basic gadget, and then describe the consistency
gadget and the construction of a variable gadget.

\noindent
{\em Basic Gadget:} The basic gadget consists of $4$ edges $e_1, \cdots, e_4$, and  
$4$ types of intervals $A,B,C$ and $D$. There are $4$ intervals each of type $A$ and
$B$, labeled $a_1, \cdots, a_4$, and $b_1, \cdots, b_4$ respectively. The intervals
$a_i = b_i = [e_i]$, $i=1, \cdots 4$. The intervals $a_1, \cdots,
a_4$ have budgets of $1,2,2,1$ respectively, and the intervals $b_1, \cdots, b_4$ have budgets
$2,1,1,2$ respectively. There are $2$ type $C$ intervals, $c_1$ and $c_2$, 
with $c_1 = [e_1, e_2]$, and $c_2 = [e_3, e_4]$. 
These intervals have a budget of $3$. There are two intervals of type $D$, $d_1 = d_2 = [e_2,e_3]$
with 
$d_1$ having a budget of $4$, and $d_2$, a budget of $2$.
The basic gadget is shown in Figure \ref{fig:pricevargad}. We now show that there are exactly two price
assignments for $\{e_1, \cdots, e_4\}$ that gives us optimum profit.

\begin{figure}[htpb]
\begin{center}
\input{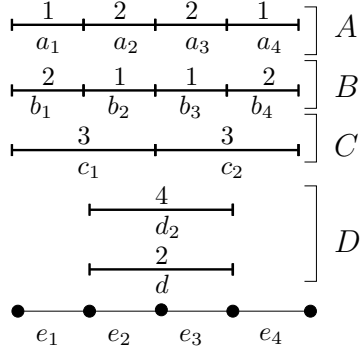}
\end{center}
\caption{A basic gadget. The gadget consists of 4 edges, and 4 types of intervals $A, B, C$ and $D$. The 
interval labels are shown below each interval, and the budgets are shown above each interval.}
\label{fig:pricevargad}
\end{figure}

\begin{lemma}
\label{lem:basegadget}
The maximum profit that can be obtained from a basic gadget is $18$, and there are exactly
two sets of prices that achieve this profit.
\end{lemma}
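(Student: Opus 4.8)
The plan is to prove a matching upper bound of $18$ on the revenue by a group-by-group accounting, and then to extract the two optimal price vectors from the equality conditions of that accounting.

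First I would fix prices $p_1,\dots,p_4$ on $e_1,\dots,e_4$ (a customer buys her interval $I$ precisely when $p(I)\le B(I)$, contributing $p(I)$ in that case) and split the $12$ intervals into the six disjoint groups $G_i=\{a_i,b_i\}$ for $i=1,\dots,4$, $G_C=\{c_1,c_2\}$, and $G_D=\{d_1,d_2\}$, so that the total revenue is the sum of the revenues of these groups. Then I would bound each group. For $G_1$ --- both intervals on $e_1$, with budgets $1$ and $2$ --- the revenue is $2p_1$ when $p_1\le 1$, is $p_1$ when $1<p_1\le 2$, and is $0$ otherwise; hence $G_1$ gives at most $2$, with equality exactly when $p_1\in\{1,2\}$. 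The same computation (the two budgets are merely swapped, which does not change the conclusion) shows each of $G_2,G_3,G_4$ gives at most $2$, with equality iff the corresponding price lies in $\{1,2\}$. For $G_C$, the interval $c_1$ gives $p_1+p_2$ when $p_1+p_2\le 3$ and $0$ otherwise, so at most $3$, with equality iff $p_1+p_2=3$; symmetrically $c_2$ gives at most $3$, with equality iff $p_3+p_4=3$. For $G_D$ --- both intervals on $[e_2,e_3]$, with budgets $4$ and $2$ --- writing $q=p_2+p_3$, the revenue is $2q$ when $q\le 2$, is $q$ when $2<q\le 4$, and is $0$ otherwise, hence at most $4$, with equality iff $q\in\{2,4\}$. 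Summing the group bounds yields the desired upper bound $8+6+4=18$.

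Next I would identify the optimal pricings: any pricing attaining $18$ must make all six bounds tight simultaneously, i.e.\ $p_i\in\{1,2\}$ for every $i$, $p_1+p_2=3$, $p_3+p_4=3$, and $p_2+p_3\in\{2,4\}$. The conditions $p_1+p_2=3$ and $p_3+p_4=3$ with prices in $\{1,2\}$ force $\{p_1,p_2\}=\{1,2\}=\{p_3,p_4\}$; then $p_2+p_3=3$ unless $p_2=p_3$, so the $G_D$-condition forces $p_2=p_3$, leaving only $(p_1,p_2,p_3,p_4)=(2,1,1,2)$ and $(1,2,2,1)$. To finish I would verify directly that each of these two price vectors satisfies all six equality conditions, hence attains revenue exactly $18$, so these are precisely the optimal pricings. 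The one spot that needs care is the equality analysis for $G_C$ and $G_D$: one must record that $G_C$ is maximized only when $p_1+p_2$ and $p_3+p_4$ are \emph{equal} to $3$ (not merely at most $3$), and likewise for $G_D$; being careless here would manufacture spurious optimal pricings. Everything else is a short finite check.
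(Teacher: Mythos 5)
Your proof is correct and uses the same decomposition as the paper: bound each pair $\{a_i,b_i\}$ by $2$, the two type-$C$ intervals by $3$ each, and the type-$D$ pair by $4$, for a total of $18$, then read off the optimizers from the simultaneous equality conditions. In fact your final step is more careful than the paper's own: the paper asserts that only four price vectors maximize the $A$, $B$ and $D$ groups, whereas eight vectors satisfy $p_i\in\{1,2\}$ for all $i$ and $p_2+p_3\in\{2,4\}$ (e.g.\ $(1,1,1,2)$ is omitted), while your use of the type-$C$ equalities $p_1+p_2=p_3+p_4=3$ together with the other tightness conditions cleanly eliminates everything except $(1,2,2,1)$ and $(2,1,1,2)$, so no case is overlooked.
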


We call the price assignment $(1,2,2,1)$ to the edges $e_1, \cdots, e_4$ respectively,  a 
TRUE assignment, 
and the price assignment $(2,1,1,2)$ to the edges $e_1, \cdots, e_4$ respectively, 
a FALSE assignment.
The variable gadget is constructed on $8n+1$ edges
$(e_{4n}, e_{4n-1}, \cdots, e_{1}, h, f_{1}, \cdots, f_{4n})$, where $n$ is the number of variables in the MAX-2-SAT instance. 
Each variable gadget consists of two copies of the basic gadget, along with a consistency gadget.
The consistency gadget ensures that the two basic gadgets have the same price assignment, i.e.,
both set to TRUE, or both set to FALSE. 
 More formally, let $(x_1,\cdots, x_n)$ be an order on the variables
of the MAX-2-SAT instance. Then, the gadget for variable $x_i$, consists of two basic gadgets,
$B^1_i$ and $B^2_i$. $B^1_i$ consists of intervals (customers) interested in the edges $e_{4i - 3}, \cdots, e_{4i}$
and $B^2_i$ consists of intervals interested in the edges $f_{4i-3}, \cdots, f_{4i}$.
Finally, the intervals ensuring consistency of the gadget for variable $x_i$
spans from $e_{4i-1}, \cdots, f_{4i-3}$. The consistency gadget consists of a single interval
that has a budget of $mn^2 + 6(2i-2) + 6$. 
Finally, we add a new type of interval, called a type $H$ interval
that is interested only in the edge $h$, and has a budget of $mn^2$. 

Figure \ref{fig:pricevarfull} shows the arrangement of the variable gadgets. We now show that the
consistency intervals do their job. i.e., if for a variable gadget, $B^1_i$ and $B^2_i$ have different
price assignments, we obtain a smaller profit than when they are the same. 

\begin{figure}[htpb]
\begin{center}
\input{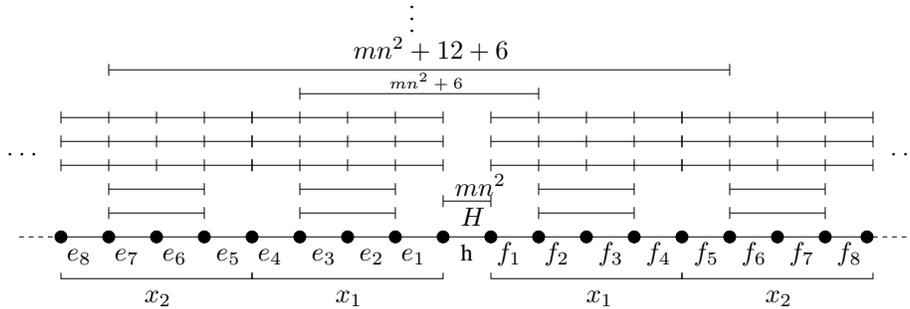}
\end{center}
\caption{The variable gadget.}
\label{fig:pricevarfull}
\end{figure}

\begin{lemma}
\label{lem:consistency}
The maximum profit of $2mn^2 + 6(2i-2) + 6 + 36$ from a variable gadget and 
the interval $h$ is achieved only
when both the basic gadgets corresponding to a variable are consistent, and the type $H$
interval purchases edge $h$ at a price of $mn^2$. 
\end{lemma}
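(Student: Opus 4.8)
The plan is to split the profit collected from the intervals of the variable gadget for $x_i$ together with the type-$H$ interval into four groups, bound each group separately, and then determine exactly when all four bounds can be attained simultaneously. The four groups are: (a) all intervals of the basic gadget $B^1_i$; (b) all intervals of $B^2_i$; (c) the single consistency interval $C_i$ of $x_i$; (d) the type-$H$ interval. Every interval in group (a) lies inside the four edges of $B^1_i$ and meets no other edge, so by Lemma~\ref{lem:basegadget} the total profit from group (a) is at most $18$; likewise group (b) yields at most $18$. The interval $C_i$ contributes at most its budget $mn^2+6(2i-2)+6$, and the type-$H$ interval at most its budget $mn^2$. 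Summing the four bounds gives $2mn^2+6(2i-2)+6+36$, which is the asserted maximum; it is attained, for instance, by giving every basic gadget a TRUE assignment and setting $p(h)=mn^2$, after which one checks directly that $C_i$ is affordable with price equal to its budget (using that under any TRUE or FALSE assignment the four edges of a basic gadget have price-sum exactly $6$).

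For the ``only when'' direction, attaining the maximum forces each of the four bounds above to be tight. Tightness of (d) forces $p(h)=mn^2$ and the $H$-interval to buy $h$ --- the second stated conclusion. Tightness of (a) and of (b) forces, by Lemma~\ref{lem:basegadget}, each of $B^1_i$ and $B^2_i$ to carry a TRUE or a FALSE price assignment. Now note how $C_i$ meets these two gadgets: it covers exactly the three edges of $B^1_i$ closest to $h$ and exactly the one edge of $B^2_i$ closest to $h$. Using the palindromic form $(1,2,2,1)$ and $(2,1,1,2)$ of the two admissible assignments --- which makes the following quantities independent of the physical orientation of the gadget on the line --- the part of $p(C_i)$ contributed by $B^1_i$ equals $5$ if $B^1_i$ is TRUE and $4$ if it is FALSE, while the part contributed by $B^2_i$ equals $1$ if $B^2_i$ is TRUE and $2$ if it is FALSE.

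To finish, we evaluate $p(C_i)$ and use tightness of (c). Since $C_i$ also runs through $h$ and through every earlier basic gadget $B^1_1,B^2_1,\ldots,B^1_{i-1},B^2_{i-1}$, we use the fact --- supplied by an induction on $i$ within the overall reduction --- that in a maximum-profit solution every earlier basic gadget likewise carries a TRUE/FALSE assignment and hence has price-sum $6$, so the portion of $p(C_i)$ running through them totals $6\cdot 2(i-1)=6(2i-2)$. Combining this with $p(h)=mn^2$ gives $p(C_i)=mn^2+6(2i-2)+s$, where $s$ is the $B^1_i$-part plus the $B^2_i$-part computed above: $s=6$ when $B^1_i$ and $B^2_i$ receive the same assignment, $s=7$ when $B^1_i$ is TRUE and $B^2_i$ is FALSE, and $s=5$ in the remaining case. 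Tightness of (c) demands that $p(C_i)$ equal the budget $mn^2+6(2i-2)+6$, i.e.\ $s=6$: if $s>6$ then $C_i$ is unaffordable and contributes $0$, and if $s<6$ then $C_i$ is purchased but strictly below budget, so in either case the total falls short of the maximum. Hence $s=6$, i.e.\ $B^1_i$ and $B^2_i$ receive the same assignment --- the first stated conclusion.

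The main obstacle is that the profit ``from a variable gadget'' is not a function of the edges of $B^1_i$ and $B^2_i$ alone: the consistency interval $C_i$ traverses $h$ and every earlier basic gadget, so pinning down $p(C_i)$, and thus concluding consistency, requires importing from the global induction of the reduction the fact that those earlier gadgets already sit at TRUE/FALSE assignments. The only other delicate point is arithmetic: verifying that the asymmetric basic-gadget budgets make the $B^1_i$-part of $p(C_i)$ differ by exactly $1$ between TRUE and FALSE, and the $B^2_i$-part differ by exactly $1$ in the opposite direction, so that the two differences cancel precisely for the consistent assignments and fail to cancel for the inconsistent ones.
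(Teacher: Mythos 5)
Your proof is correct, and its arithmetic core is exactly the paper's: the consistency interval covers three edges of $B^1_i$ (contributing $5$ under TRUE, $4$ under FALSE) and one edge of $B^2_i$ (contributing $1$ under TRUE, $2$ under FALSE), so its bundle price is $mn^2+6(2i-2)+6$ precisely for the two consistent combinations, overshooting by $1$ for (TRUE, FALSE) and undershooting by $1$ for (FALSE, TRUE). Where you differ is in the packaging. The paper argues the two inconsistent cases separately and, in the (TRUE, FALSE) case, explicitly entertains the repair of lowering $p(h)$ to $mn^2-1$ so that the consistency interval can still buy, showing the net loss is still $1$; your "sum of four tight upper bounds" decomposition makes that case-split unnecessary, since any such repair already violates tightness of the $H$-interval bound and hence drops the total below the target. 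Your version is cleaner on this point. You are also more explicit than the paper about the fact that the lemma is not purely local: the consistency interval traverses $h$ and all earlier basic gadgets, so evaluating $p(C_i)$ requires that those gadgets sit at TRUE/FALSE prices (sum $6$ each) and that $p(h)=mn^2$ --- assumptions the paper's proof uses silently and which the overall reduction enforces via the $T$ copies of the gadgets and of the $H$ interval. Flagging that dependence is a genuine improvement in rigor, not a gap.
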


We will create several copies of the basic gadgets, the consistency gadgets for each variable as well
as several copies of the $H$ interval to ensure
that in an optimum price assignment, the basic gadgets are consistent, and the reduction goes
through. But before we do this, we describe the clause gadgets. 

\subsubsection{Clause Gadgets}
The clause gadget for a clause of variables $x_i$ and $x_j$ runs between the basic gadget $B^1_i$
and $B^2_j$. There are four types of clause gadgets corresponding to the four types of clauses.
Each clause gadget consists of one interval. These intervals have the property that 
we obtain a certain revenue from the clause interval if and only if the clause is satisfied; otherwise
we obtain nothing.  The clause gadgets for the four types of clauses are shown in 
Table \ref{fig:clausetab} and in 
Figures \ref{fig:prxy}, \ref{fig:prxcy}, \ref{fig:prxyc}, and \ref{fig:prxcyc} in the Appendix.

\begin{figure}
\begin{center}
\begin{tabular}{|c|c|c|}
\hline
Clause & Interval & Budget \\\hline
\xy & $[e_{4i-3}, f_{4j-3}]$ & $mn^2 + 6(i + j - 2) + 3$\\ \hline
\xcy & $[e_{4i-1},  f_{4j-3}]$ & $mn^2 + 6(i + j - 2) + 6$\\ \hline
\xyc & $[e_{4i-3}, f_{4j-1}]$ & $mn^2 + 6(i + j - 2) + 6$ \\ \hline
\xcyc &$[e_{4i-1}, f_{4j-1}]$ & $mn^2 + 6(i + j - 2) + 9$ \\ \hline
\end{tabular}
\end{center}
\caption{This table shows the lengths and budgets of the intervals making up a clause gadget for the four different kinds of clauses.}
\label{fig:clausetab}
\end{figure}

We say that a pricing is {\em consistent} if for every variable, 
the price assignment to the two basic gadgets of the variable gadget are both TRUE or both FALSE, and
the consistency intervals spend their entire budgets.

\begin{lemma}
\label{lem:clause}
Consider a clause $C$ consisting of variables $x_i$ and $x_j$ and a consistent price assignment
to the edges. Then, the intervals corresponding to $C$ will be able to purchase their desired edges
if and only if the corresponding truth assignment to the variables satisfies the clause $C$.
\end{lemma}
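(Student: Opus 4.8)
The plan is to verify, by a direct case analysis over the four clause types and the (at most) four relevant truth assignments, that the budget of the single clause interval exactly matches the accumulated price along its path precisely when the clause is satisfied. First I would fix a consistent pricing and record the basic facts it implies. Since every variable gadget is consistent, each basic gadget $B^1_i, B^2_i$ carries either the TRUE price vector $(1,2,2,1)$ or the FALSE vector $(2,1,1,2)$ on its four edges; in particular, for the copy $B^1_i$ on edges $e_{4i-3},\dots,e_{4i}$, the prefix price $p([e_{4i-3}, e_{4i-1}])$ equals $1+2+2=5$ if $x_i$ is TRUE and $2+1+1=4$ if $x_i$ is FALSE, while $p([e_{4i-3},e_{4i-3}])$ equals $1$ if TRUE and $2$ if FALSE. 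Symmetric statements hold for $B^2_j$ on the $f$-edges. I would also use that the consistency intervals spend their entire budgets, so the price accumulated on the stretch of edges strictly between two consecutive variable gadgets is a fixed, assignment-independent constant; combined with the $H$-interval pricing $p(h)=mn^2$ from Lemma~\ref{lem:consistency}, this forces the ``background'' price of the path $[e_{4i-3}, f_{4j-3}]$ to be exactly $mn^2 + 6(i+j-2)$ plus the two variable-dependent prefix contributions. (This is where the somewhat arbitrary-looking constants $6(i+j-2)$ in Table~\ref{fig:clausetab} come from: the $6$ per intervening variable gadget is the total price of one basic gadget's edges under either assignment, $1+2+2+1 = 2+1+1+2 = 6$, times the two copies, appropriately bookkept.)

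Next I would compute, for each of the four clause types, the total price $p(I_C)$ of the clause interval $I_C$ under each of the four assignments to $(x_i,x_j)$, and compare with the stated budget. For the positive clause $(x_i\vee x_j)$, $I_C=[e_{4i-3}, f_{4j-3}]$, so $p(I_C) = \big(\text{prefix in }B^1_i\text{ up to }e_{4i-3}\big) + \big(\text{background}\big) + \big(\text{prefix in }B^2_j\text{ up to }f_{4j-3}\big)$, where the first prefix is $1$ (if $x_i$ TRUE) or $2$ (if $x_i$ FALSE) and similarly the last is $1$ or $2$ in terms of $x_j$; thus $p(I_C) = mn^2 + 6(i+j-2) + \alpha + \beta$ with $\alpha,\beta\in\{1,2\}$, where $\alpha$ takes the value making the clause-satisfying literal cheap. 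The budget is $mn^2 + 6(i+j-2)+3$, which is met (as $p(I_C)\le B(I_C)$) exactly when $\alpha+\beta\le 3$, i.e. when at least one of $x_i,x_j$ is set to satisfy $(x_i\vee x_j)$ — and fails (price $=4 > 3$) only when both literals are falsified. The three other rows are identical in structure: for a negated literal on $x_i$ one starts the interval at $e_{4i-1}$ instead of $e_{4i-3}$, which by the computation above contributes a prefix of $5$ (when $x_i$ TRUE) or $4$ (when $x_i$ FALSE), i.e.\ the ``cheap'' prefix is now the one corresponding to $x_i$ FALSE; the budget shifts by $+3$ for each negated literal, which is precisely the difference between the two possible prefix values offset appropriately, so the same ``satisfied $\iff$ affordable'' equivalence holds. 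I would present this as a single table of the four price-values against the stated budgets rather than belabouring each case.

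The only real obstacle is bookkeeping the ``background'' price correctly — i.e.\ being completely careful about which edges lie strictly between $B^1_i$ and $B^2_j$ along the path, how many intervening variable gadgets (and the central edge $h$) are crossed, and confirming that the consistency intervals being saturated really does pin down that background to the claimed constant independent of the truth assignment. Once that constant is established, the rest is the elementary arithmetic of comparing $mn^2 + 6(i+j-2) + (\text{small})$ against the tabulated budget, and the ``if and only if'' falls out because in each row exactly the one assignment to $(x_i,x_j)$ that falsifies the clause pushes the required price one unit above the budget while all satisfying assignments leave it at or below the budget. I would close by noting that, conversely, whenever the clause is satisfied the interval can indeed be sold (its price is within budget), completing the equivalence asserted in the lemma.
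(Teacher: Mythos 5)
Your proposal is correct and follows essentially the same route as the paper's proof: a direct computation of the clause interval's bundle price under a consistent pricing (one or three edges of each of $B^1_i$ and $B^2_j$, plus the assignment-independent background of $6$ per intervening basic gadget and $mn^2$ for $h$), compared against the tabulated budget, with the single falsifying assignment exceeding the budget by exactly one unit. The paper states this more tersely ("it is easy to see that the price of the bundle is $mn^2+6(i+j-2)+4$" in the unsatisfied case of \xy, with the other clause types "similar"), whereas you make the prefix-price bookkeeping explicit; the content is the same.
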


\subsubsection{NP-hardness}
We now describe the final reduction. As mentioned earlier, we have to create copies of the variable gadget, 
consistency gadget and the $H$ interval for the proof to go through. We make $T$ copies
of each basic gadget, of each consistency gadget, and of the $H$ interval, where any value of
$T$, larger than $m^2n^3$ will suffice for the proof.
Observe that for a variable gadget again, the profit maximizing prices achieve
consistency of the variable gadget, and making $T$ copies of the $H$ intervals ensures that
the price of the edge $h$ is set to $mn^2$.

\begin{theorem}\label{t-strongNP}
The highway problem is strongly NP-hard.
\end{theorem}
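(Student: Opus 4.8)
The plan is to establish strong NP-hardness of the highway problem by completing the reduction from MAX-2-SAT sketched above, verifying that the gadget calculations fit together and that the parameter $T$ is chosen polynomially in the input size so that a polynomial-time algorithm for the highway problem would decide MAX-2-SAT. First I would prove Lemma~\ref{lem:basegadget}: on the four edges $e_1,\dots,e_4$ of a basic gadget, one checks by a finite case analysis that the revenue is maximized (at $18$) exactly by the two price vectors $(1,2,2,1)$ and $(2,1,1,2)$. Concretely, observe that the $A$- and $B$-type singleton intervals force $p(e_i)\in\{1,2\}$ essentially (any price above $2$ on $e_i$ loses both $a_i$ and $b_i$, any price strictly between the two integer values cannot beat rounding), and then the $C$-type intervals $c_1=[e_1,e_2]$, $c_2=[e_3,e_4]$ with budget $3$ together with the $D$-type intervals $d_1,d_2=[e_2,e_3]$ with budgets $4,2$ impose $p(e_1)+p(e_2)=3$, $p(e_3)+p(e_4)=3$ and $p(e_2)+p(e_3)\in\{2,4\}$; the only solutions in $\{1,2\}^4$ are the two claimed ones, and one tallies that each yields revenue $4+4\cdot? = 18$ (sum the contributions of $A$, $B$, $C$, $D$ in each case).

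Next I would prove Lemma~\ref{lem:consistency} and Lemma~\ref{lem:clause}. For the consistency lemma, note that the consistency interval for variable $x_i$ spans from $e_{4i-1}$ through $f_{4i-3}$, so its price under a pair of basic-gadget assignments is $mn^2$ (from edge $h$) plus a correction term depending on whether $B_i^1$ and $B_i^2$ are both TRUE, both FALSE, or mixed; by choosing the budget to be exactly $mn^2+6(2i-2)+6$, the interval can be fully sold iff the two copies are consistent, and otherwise one loses at least $1$ unit of revenue there while the $H$ interval (budget $mn^2$, on edge $h$ alone) forces $p(h)=mn^2$ in any profit-maximizing solution. Summing over the two basic gadgets ($2\times 18=36$), the type-$H$ interval, and the consistency interval gives the stated optimum $2mn^2+6(2i-2)+6+36$, attained only at consistent assignments. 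For Lemma~\ref{lem:clause}, one simply checks, for each of the four clause types in Table~\ref{fig:clausetab}, that under a consistent pricing the accumulated price $p([e_{4i-3\text{ or }4i-1},\,f_{4j-3\text{ or }4j-1}])$ equals the listed budget precisely when the literal pattern satisfies the clause (TRUE on $x_i$ contributes its TRUE-prefix sum, etc.), and strictly exceeds it otherwise.

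Then I would assemble the global reduction. Make $T>m^2n^3$ copies of every basic gadget, every consistency interval, and the $H$ interval; this multiplies the "structural" part of the optimum by $T$ and dwarfs the total contribution of all $m$ clause intervals (each bounded by roughly $mn^2+O(n)$, so their sum is $O(m^2n^2)<T$). Hence in any optimal solution every copy must be at its own local optimum, which by the lemmas forces a globally consistent pricing, i.e.\ an honest truth assignment $\tau$ to $x_1,\dots,x_n$; and given consistency, by Lemma~\ref{lem:clause} the extra revenue collected from the clause intervals is exactly (a fixed known quantity determined by the budgets) times the number of clauses satisfied by $\tau$. Therefore $\OPT$ of the highway instance equals an explicit affine function of the maximum number of simultaneously satisfiable clauses of the MAX-2-SAT instance. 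Since all budgets are integers bounded by a polynomial in $n$ and $m$ (indeed $mn^2+O(n)$), the construction is strongly polynomial, so a polynomial-time exact algorithm for the highway problem would solve MAX-2-SAT exactly; as MAX-2-SAT is strongly NP-hard, so is the highway problem, proving Theorem~\ref{t-strongNP}.

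The main obstacle is the bookkeeping in Lemma~\ref{lem:consistency}: one must track the accumulated prices across the long consistency interval (which straddles edge $h$ and both basic gadgets) and verify that the offsets $6(2i-2)$, chosen to make the budgets of consistency and clause intervals for different index pairs $(i,j)$ all distinct and correctly ordered, indeed make "consistent + clause satisfied" the unique way to extract the claimed revenue, with no unintended alternative pricing (e.g.\ sacrificing part of a basic gadget to oversell some long interval) doing better — this is exactly where the $T$-copies trick and the gap between $T$ and $O(m^2n^2)$ must be invoked carefully.
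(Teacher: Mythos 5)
Your proposal follows essentially the same route as the paper: verify the three gadget lemmas, use the $T$ copies to force consistency of all variable gadgets and $p(h)=mn^2$ in any optimum, and then read off the number of satisfiable clauses from the clause-interval revenue, concluding strong NP-hardness because all budgets are polynomially bounded integers. The only slight overstatement is that the optimum is not \emph{exactly} an affine function of the number of satisfied clauses (the revenue per satisfied clause varies by $O(n)$ with the clause type and the particular satisfying assignment); the paper handles this the way you implicitly do, by noting that each clause interval's budget is at least $mn^2$ while the variations are only $O(n)$, so maximizing revenue still maximizes the number of satisfied clauses.
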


\subsection{APX-hardness in the discount model}
 
\begin{theorem}
The highway problem with negative prices is APX-hard, even restricted to instances in which one edge is shared by all customers.
\end{theorem}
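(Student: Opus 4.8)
The plan is to give an L-reduction from \textsc{MaxCut} on graphs of bounded degree $\Delta=O(1)$ with no isolated vertex: this problem is APX-hard, and for such graphs $\textsc{MaxCut}(G)=\Theta(|E|)=\Theta(|V|)$. Given $G=(V,E)$ with $|V|=n$, I would build a highway instance with negative prices on a path that has a distinguished \emph{central} edge $e_0$; to its left I place edges $a_1,\dots,a_n$ (with $a_1$ adjacent to $e_0$), one per vertex, and symmetrically edges $b_1,\dots,b_n$ to its right. Every customer interval used will be one of $\{e_0\}$, $[a_i,\dots,e_0]$, $[e_0,\dots,b_j]$, or $[a_i,\dots,e_0,\dots,b_j]$, so all customers contain $e_0$ and the instance has the required form. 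Writing $x=p(e_0)$, $\alpha_v=\sum_{k\le v}p(a_k)$ and $\beta_v=\sum_{k\le v}p(b_k)$, the map from the edge prices to $(x;\alpha_1,\dots,\alpha_n;\beta_1,\dots,\beta_n)$ is a linear bijection onto $\mathbb R^{2n+1}$ (lower triangular with unit diagonal on each side); this is the sole point where negativity of prices is used, since it lets the $\alpha_v$ and $\beta_v$ be arbitrary independent reals. The price paid by $[a_i,\dots,e_0,\dots,b_j]$ is exactly $\alpha_i+x+\beta_j$, and by its one-sided analogues $\alpha_i+x$, $x+\beta_j$.

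On this skeleton I layer four kinds of gadgets, each built from $O(1)$ customers of $O(1)$ budget and multiplicity (so the whole instance has optimum $\Theta(n)$). (1) A few \emph{anchor} customers on $\{e_0\}$ which, together with the edge customers of (4), force $x=x^*$ (say $x^*=2$) in every near-optimal pricing: a larger $x$ makes the budget-$x^*$ edge customers essentially unservable, and a smaller $x$ drives the consistency customers of (3) into a degenerate regime that again kills all edge customers. (2) For each vertex $v$, a \emph{binary gadget}: two customers on $[a_v,\dots,e_0]$ whose budgets and multiplicities are tuned so that, as in Lemma~\ref{lem:basegadget}, the two best choices of $\alpha_v$ are $0$ and $1$ (equally profitable) while any $\alpha_v$ bounded away from $\{0,1\}$ is worse by a fixed constant; an identical gadget on $[e_0,\dots,b_v]$ governs $\beta_v$. (3) For each vertex $v$, one \emph{consistency} customer on $[a_v,\dots,e_0,\dots,b_v]$ with budget $x^*+1$, served at the maximal price $x^*+1$ precisely when $\alpha_v+\beta_v=1$ and at a strictly smaller price (or not at all) otherwise; given enough multiplicity to outweigh the edge customers, this makes $\beta_v=1-\alpha_v$ the unique optimal completion of a choice $\alpha_v\in\{0,1\}$. (4) For each edge $\{u,v\}\in E$, two \emph{edge} customers $[a_u,\dots,e_0,\dots,b_v]$ and $[a_v,\dots,e_0,\dots,b_u]$ of budget $x^*$; at the intended values, $[a_u,\dots,e_0,\dots,b_v]$ is served (for profit $x^*$) iff $\alpha_u=0$ and $\beta_v=0$, i.e. iff $\sigma(u)=0$ and $\sigma(v)=1$ where $\sigma(w):=\alpha_w$, so the two customers of $\{u,v\}$ together contribute $x^*$ exactly when $\sigma(u)\neq\sigma(v)$. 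Hence a pricing at its intended values realizes an arbitrary $2$-colouring $\sigma\in\{0,1\}^V$ with profit $\Gamma n+x^*\cdot|\{uv\in E:\sigma(u)\neq\sigma(v)\}|$ for an absolute constant $\Gamma$, so $\OPT=\Gamma n+x^*\cdot\textsc{MaxCut}(G)$; since $\Gamma n=\Theta(\textsc{MaxCut}(G))$, this is the first requirement of an L-reduction.

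The main work, and the genuine obstacle, is the converse: turning a pricing of value $\ge(1-\eps)\OPT$ into a cut of size $\ge(1-O(\eps))\textsc{MaxCut}(G)$. Fix a small constant $c>0$ and call a vertex $v$ \emph{clean} if $\alpha_v,\beta_v$ lie within $c$ of $\{0,1\}$ with $\beta_v=1-\alpha_v$ up to $O(c)$. By the design of gadgets (1)--(3), any non-clean vertex, and any $x\ne x^*$ used to serve an edge customer, forfeits a fixed constant $\mu>0$ relative to a locally intended configuration, so a $(1-\eps)$-approximate pricing has at most $O(\eps\OPT/\mu)=O(\eps n)$ non-clean vertices. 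Set $\sigma(v)=\mathrm{round}(\alpha_v)$ on clean $v$ and arbitrarily elsewhere. Giving each edge customer an $O(c)$ budget slack, one checks that an edge with both endpoints clean contributes to the pricing's profit iff $\sigma(u)\neq\sigma(v)$, with contribution $x^*\pm(\text{offset of }u)\pm(\text{offset of }v)$; since near-optimality also caps the \emph{aggregate} offset of the clean vertices by $O(\eps\OPT)=O(\eps n)$, the pricing's total edge-profit equals $x^*\cdot|\{uv\in E:\ u,v\text{ clean},\ \sigma(u)\neq\sigma(v)\}|\pm O(\eps n)$ plus a term of size $O(\Delta\eps n)=O(\eps n)$ from edges meeting a non-clean vertex. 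As the gadget part is at most $\Gamma n$, the edge-profit is $\ge(1-\eps)\OPT-\Gamma n=x^*\cdot\textsc{MaxCut}(G)-O(\eps n)$, so $\sigma$ cuts at least $\textsc{MaxCut}(G)-O(\eps n)=(1-O(\eps))\textsc{MaxCut}(G)$ edges. Unlike the strong-NP-hardness construction of Section~\ref{StrongNP}, one cannot amplify by taking $\poly$ many copies of the gadgets --- that would make $\OPT$ super-linear in $\textsc{MaxCut}(G)$ and destroy the L-reduction --- so the crux is exactly this quantitative robustness: every gadget must penalize deviations by an absolute constant, and the error analysis must be split into a \emph{count} of badly-set vertices (each costing $\Omega(1)$) and an \emph{aggregate} bound on the residual price imprecision. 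Granting this, APX-hardness of bounded-degree \textsc{MaxCut} transfers, and the highway problem with negative prices is APX-hard even when one edge is common to all customers.
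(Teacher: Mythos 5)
Your proposal is correct in outline and shares its two load-bearing ideas with the paper's proof: the reduction source is MaxCut on bounded-degree graphs, and the crucial use of negative prices is exactly the observation that, once a single edge $e_0$ is common to all customers, the cumulative prices on the two sides of $e_0$ become freely and independently choosable reals, i.e.\ the instance is really a vertex-pricing problem on a bipartite graph. Where you diverge is the gadgetry. The paper subdivides each edge $e=\{u,v\}$ of a cubic graph by a new vertex $v_e$ and uses the natural bipartition $(V, V'=\{v_e\}\cup\{v_0\})$: each original vertex lives on only one side, its price is pinned to $\{1,2\}$ by ten-plus-ten heavy customers on the edge $(v,v_0)$, and the cut value is read off from the asymmetric budgets on $e_1$ versus $e_2$ ($4$ profit for an uncut edge, $5$ for a cut one) --- no consistency machinery is needed. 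You instead duplicate every vertex on both sides ($\alpha_v$ and $\beta_v$) and tie the copies together with per-vertex consistency customers, importing the variable-duplication idea from the strong NP-hardness construction of Section~\ref{StrongNP}; this costs an extra gadget type but makes the ``edge customer served iff endpoints differ'' condition very transparent. Both routes then need the same quantitative robustness for the L-reduction, which the paper supplies via explicit exchange arguments (gain $20$ versus lose at most $15$; gain $10(2-\alpha)$ versus lose at most $9(2-\alpha)$) and you supply via the clean/non-clean dichotomy. The one genuine gap is that this bookkeeping --- concrete budgets and multiplicities for the anchor, binary and consistency gadgets, and a check that a single mis-set vertex (or a shift of $x$ off $x^*$) cannot gain more from its $O(\Delta)$ incident edge customers than it forfeits in its own gadgets --- is asserted rather than carried out, and it is precisely the part that occupies most of the paper's argument; until it is done, the claim $\OPT=\Gamma n+x^*\cdot\textsc{MaxCut}(G)$ (rather than just $\geq$) is unproved.
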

\begin{proof}
We will show that the problem is equivalent to a pricing problem on bipartite graphs and prove that the latter problem is APX-hard. 
Assume we are given an instance of the highway problem in which edge $e$ is contained in each of the intervals. We split edge $e$ by adding a node $v_0$ on $e$. This has no effect on the problem. We construct a bipartite graph $H$ with one set consisting of the points left of $v_0$ and the other set consisting of the points right of $v_0$. Now, an interval containing $e$ becomes an edge in the bipartite graph.  The items are the vertices and a customer is interested in the two items on the vertices of its edge. Given a pricing $p$ for the highway instance, we define the pricing $q_p$ of $H$ by letting the price of a vertex $v$ in $H$ be the cumulative price $p([v_0,v])$. Conversely, for any pricing $q$ of the vertices of $H$ there is a corresponding  pricing $p$ of the highway problem such that $q=q_p$.
\begin{figure}
  % Requires \usepackage{graphicx}
  \center\includegraphics[width=11cm]{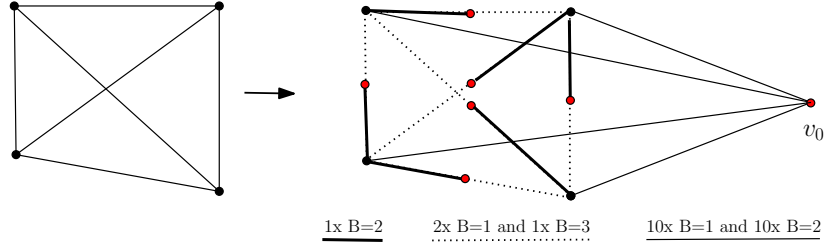}\\
  \caption{Reduction: Each edges $e$ is partitioned in edges $e_1$ (fat) and $e_2$ (dotted). Which is which is arbitrary. The notation 1x B=2 means that there is one customer with budget 2 on this type of edge.}\label{fig:APX}
\end{figure}

We prove that the pricing problem on bipartite graphs is APX-hard by a reduction from maxcut on 3-regular graphs. Given a 3-regular graph  $G=(V,E)$, we make it bipartite by placing an extra vertex $v_e$ on every edge $e$, dividing it in two new edges $e_1$ and $e_2$. For $e_1$ there is one customer with budget $2$ and for $e_2$ there are two customers with budget $1$  and one customer with budget $3$.  We define one extra vertex $v_0$ and define for each $v\in V$ an edge $a_v=(v,v_0)$.  For each such edge $a_v$ there are ten customers with budget $1$ and ten customers with budget $2$. The bipartite graph partitions into $V$ and the new vertices $V'=\{v_e\mid e\in E\}\cup\{v_0\}$. (To enhance reading we write prices and budgets in decimal and amounts of customers in words.)

Consider any pricing $p$ of  the bipartite graph. We may assume that  $p(v_0)=0$ since  subtracting $p(v_0)$ from all vertices in $V'$ and adding   $p(v_0)$ to all vertices in $V$ does not change the profit. We will have to take into account though that prices may be negative.

Next, we argue that in any optimal solution the price $p(v)$ of any vertex $v\in V$ is either $1$ or $2$. Denote by $p(e)$ the profit we get from the  customer on $e_1$ plus the three customers on $e_2$. It is easy to see that $3\le p(e)\le 5$ for any edge $e$. Suppose $p(v)=\alpha\notin\{1,2\}$.
If $\alpha>2$ or $\alpha\le 0$ then the profit on $a_v$ is 0. By changing the price to $p(v)=2$ the profit becomes ten times $2$ is 20. The maximum profit on any of the three adjacent edges $e$ is $5$. Hence, we gain $20$ and loose at most $15$.
Now assume $0<\alpha<1$. We raise the price of $v$ to price $p(v)=1$ and reduce the price on the vertices $v_e$ by $1-\alpha$ for each of the three adjacent edges $e$ of $v$ in $G$. The 20 customers on $a_v$ add an extra $20(1-\alpha)$ to the profit. No other customer sees an increase of its bundle price and at most 9 customers will see a reduction of the bundle price. Hence, we loose at most $9(1-\alpha)$ on them. Now assume $1<\alpha<2$.
We raise the price to $p(v)=2$ and reduce the price on the vertices $v_e$ of adjacent edges $e$ by $2-\alpha$. The argument is the same: We gain $10(2-\alpha)$ and loose at most $9(2-\alpha)$ since at most $9$ customers will see their bundle price drop.

We showed that there is an optimal pricing in which $p(v_0)=0$ and $p(v)\in\{1,2\}$ for all $v\in V$. Next we prove that there is a cut of size $k$ in $G$ if and only if the maximum profit is $20|V|+4|E|+k$. Given a cut of size $k$ we price the vertices on one side 1 and on the other side 2. Now consider an edge $e=(v,w)\in E$ with $p(v)=p(w)=1$. We can get a profit $p(e)=4$ by setting  $p(v_e)=0$ or $p(v_e)=1$. This is also the maximum profit possible. Similarly, if $p(v)=p(w)=2$ then we can get the maximum profit $p(e)=4$ by setting $p(v_e)=0$. Finally, if $p(v)=1$ and $p(w)=2$ then we can get the maximum possible profit $p(e)=5$ by setting $p(v_e)=0$ or $p(v_e)=1$, depending on how we chose $e_1$ and $e_2$. From the customers on edges adjacent to $v_0$ we get the maximum profit of 20 per edge.
The total profit is exactly  $20|V|+5k+4(|E|-k)=20n+4|E|+k$ and this is maximum possible if the maximum cut is $k$. The reduction is gap-preserving since $3|V|=2|E|$ and $|E|\le 2k$.\qed
\end{proof}

In the reduction we showed that there always is an optimal pricing of the bipartite graph with only non-negative prices. Hence, the bipartite graph pricing problem remains APX-hard if we restrict to non-negative prices.
\begin{corollary}
The graph pricing problem is APX-hard on bipartite graphs and all budgets in $\{1,2,3\}$. This holds for the non-negative version as well as for the version with negative prices allowed.
\end{corollary}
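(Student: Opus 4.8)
The plan is to observe that the corollary is already contained, up to bookkeeping, in the reduction used to prove the preceding theorem; what remains is to verify two features of that reduction: that it only ever uses budgets in $\{1,2,3\}$, and that forbidding negative prices does not change the optimal profit.

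First I would revisit the reduction from maxcut on $3$-regular graphs given in the proof of the theorem. Given a cubic graph $G=(V,E)$, the instance built there is precisely a graph pricing instance on the bipartite graph with sides $V$ and $V'=\{v_e : e\in E\}\cup\{v_0\}$: each subdivided edge $e$ contributes one customer of budget $2$ on $e_1$ and two customers of budget $1$ together with one customer of budget $3$ on $e_2$, while each edge $a_v=(v,v_0)$ contributes ten customers of budget $1$ and ten customers of budget $2$. Hence every budget lies in $\{1,2,3\}$, and each customer is interested in exactly the two items on the endpoints of her edge, so this is a graph pricing instance on a bipartite graph with budgets in $\{1,2,3\}$. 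The theorem's proof already checked that the reduction is gap-preserving (a cut of size $k$ corresponds to maximum profit $20|V|+4|E|+k$, with $3|V|=2|E|$ and $|E|\le 2k$), and maxcut on cubic graphs is APX-hard; therefore the graph pricing problem on bipartite graphs with budgets in $\{1,2,3\}$ is APX-hard in the model that allows negative prices.

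Next I would upgrade this to the non-negative model. The case analysis in the proof of the theorem already shows that some optimal pricing of the constructed instance has $p(v_0)=0$ and $p(v)\in\{1,2\}$ for every $v\in V$; and for such a pricing the same analysis shows that each vertex $v_e\in V'$ can be (re)priced within $\{0,1\}$ without decreasing the profit obtained from the corresponding edge. Thus there is an optimal pricing of the constructed instance that is entirely non-negative, so its optimum value is the same whether or not negative prices are permitted. Consequently the identical reduction establishes APX-hardness of the bipartite graph pricing problem with budgets in $\{1,2,3\}$ in the non-negative model as well, which together with the previous paragraph yields the corollary.

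I expect the only real obstacle to be this bookkeeping: confirming that the $a_v$-gadget (and every other gadget) can be realized with budgets from $\{1,2,3\}$ alone, and that the optimal pricing can be chosen non-negative coordinate by coordinate. Both points were effectively settled inside the proof of the preceding theorem, so no new idea should be needed.
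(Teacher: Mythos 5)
Your proposal is correct and follows the paper's own route: the corollary is obtained by inspecting the reduction in the preceding theorem, noting that every budget used there lies in $\{1,2,3\}$, and invoking the fact (already established in that proof) that an optimal pricing exists with $p(v_0)=0$, $p(v)\in\{1,2\}$ and $p(v_e)\in\{0,1\}$, hence entirely non-negative. No gaps; this is essentially the same argument the paper gives.
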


Guruswami et al.~\cite{GHKKKM05} show that the graph-pricing problem is APX-hard even if all budgets are 1. Note that the bipartite case is trivially solved in that case by setting a price of 1 to all items on one side.

\section{Conclusion}\label{s5}
In this paper, we presented an $O(\log n)$-approximation algorithm for the tollbooth problem on trees, which is better than
the upper bound currently known for the general problem. Improving this bound is an interesting open problem. One plausible direction
towards this is to use as a subroutine, the quasi-polynomial time algorithm for the case of uncrossing paths. 
Such techniques have been used before, for example for the multicut problem on trees \cite{GNM06}. 
However,
it is
unclear how a general instance of the \TB\ problem can be decomposed into a set of problems of the uncrossing type.
For the highway
problem, the strong NP-hardness presented in this paper 
shows that the problem is almost closed, modulo improving the running time
from quasi-polynomial to polynomial. 

\medskip

\noindent
{\bf Acknowledgements:} We would like to thank Naveen Garg for suggesting to use the separator theorem in the proof of Theorem \ref{t1}, and Chaitanya Swamy for helpful remarks.

\bibliographystyle{amsplain}
\bibliography{tollbooth}

\newpage
\noindent
{\bf\Large Appendix A: Proofs}

\medskip

\noindent{\bf Proof of Lemma \ref{l1}}.~
Let $R=(w,k,k',u_1,\ldots,u_{k},i_1,\ldots,i_k,p_1,\ldots,p_{k'})$ be defined as follows: write $u_0=w$ and $i_0=-\infty$, and let $i_j$ and $u_j$, for $j=1,2,\ldots$, be respectively the smallest non-negative index and the closest node to $u_{j-1}$ on the path $[u_{j-1},\br]$ with $(1+\epsilon)^{i_{j-1}}<\tilde{p}([w,u_{j}])\leq (1+\epsilon)^{i_{j}}$; $k$ will be the largest such index $j$.
Finally, for $j=1,\ldots,k'$, let $p_j=p([w,s_j])$. Note that $k\le K$ since $(1+\epsilon)^k \leq nP$, and $k'\le 2\log_{3/2} n$ since the number of separators on the path from any node to the root is at most $2\log_{3/2} n$. 
\qed

\medskip

\noindent{\bf Proof of Lemma \ref{l2}}.~
The number of possible $\epsilon$-relative pricing is at most $L$, given in \raf{bd}. This gives the recurrence
$$
T(n)\leq \poly(n,m)+2L \cdot T(\frac{2n}{3}).
$$
for the running time. Thus $T(n)\leq L^{O(\log n)}\poly(m)$ and the lemma follows.
\qed

\medskip

\noindent{\bf Proof of Lemma \ref{l3}}.~ Let $(p_{\bT},\cJ_{\bT})$ be the solution returned by the algorithm when the input is $(\bT,\cI_{\bT},\br_{\bT},B_{\bT},h_{\bT},\cS_{\bT})$.  
We show by induction on the depth of the recursion tree that, if there exists a pricing $p'_{\bT}$ satisfying $\cS_{\bT}$, and a set $\cJ'_{\bT}$ such that $p'_{\bT}(I)\leq B_{\bT}(I)/(1+\epsilon)$ for all $I\in\cJ'_{\bT}$, then
\begin{itemize}
\item[(i)] $p_{\bT}(I)\le B_{\bT}(I)$ for all $I\in\cJ_{\bT}$, and $p_{\bT}$ is feasible for $\cS_{\bT}$; and
\item[(ii)] $p_{\bT}(\cJ_{\bT})+h_{\bT}(\cJ_{\bT})\geq \frac{p'_{\bT}(\cJ'_{\bT})}{1+\epsilon}+h_{\bT}(\cJ'_{\bT})$.
\end{itemize}

The statement of the theorem follows from (i) and (ii) by taking, at the highest level where $h_{\bT}(I)=0$ for all $I$,  $p'_{\bT}=\tilde{p}$ (an $\epsilon$-optimal pricing) and $\cJ'_{\bT}=\OPT$. 

\noindent{\it Base case.}~ At a leaf of the recursion tree, we either have $|\cI|=0$ in which case (i) and (ii) are trivially satisfied, 
or $d(\bT)=1$ in which case (i) and the stronger version of (ii), $p_{\bT}(\cJ_{\bT})+h(\cJ_{\bT})\geq p'_{\bT}(\cJ'_{\bT})+h_{\bT}(\cJ'_{\bT})$, are insured by 
the computation in line 7. 

\noindent{\it General recursion level.}~ Let $w,\bT_L,\bT_R,\cI_0,\cI_L,\cI_R$ be as defined in line 10 at the current level, and $p_1,p_2,\cJ_1$ and $\cJ_2$ the returned pricings and sets at lines 15 and 16. Let $R_{\bT}$ be an $\epsilon$-relative pricing consistent with $p'_{\bT}$. 
Then the restrictions $p'_{\bT_L}$ and $p'_{\bT_R}$ of $p'_{\bT}$ on $\bT_L$ and $\bT_R$ satisfy, respectively, $\cS_{\bT_L}$ and $\cS_{\bT_R}$. 
Moreover, for any $I\in\cJ'_{\bT}\cap(\cI_L\setminus \cI_0)$, we have $p_{\bT_L}'(I)\leq \frac{B_{\bT_L}(I)}{1+\epsilon}$; for any $I\in\cJ'_{\bT}\cap\cI_R$, we have $p_{\bT_R}'(I)\leq \frac{B_{\bT_R}(I)}{1+\epsilon}$; and for any $I=[s,t]\in\cJ'_{\bT}\cap\cI_0$, we have $p_{\bT_L}'(I)=p_{\bT}'([s,w])=p_{\bT}'([s,t])-p_{\bT}'([w,t])\leq p_{\bT}'([s,t])-v(I,R_{\bT})\leq \frac{B_{\bT}(I)}{1+\epsilon}-v(I,R_{\bT})=\frac{B_{\bT_L}(I)}{1+\epsilon},$ where the first inequality follow from \raf{ext}, and the last equation follows from line 13 of the procedure. Thus we can apply the induction hypothesis to the two subproblems, and hence get that
\begin{eqnarray}
\label{e.1.1}
p_1(I)&\le& B_{\bT_L}(I) \mbox{ for all }I\in \cJ_1,\\
\label{e.1.2}
p_2(I)&\le& B_{\bT_R}(I)  \mbox{ for all }I\in \cJ_2,\\
\label{e.1.3}
p_1(\cJ_1)+h_{\bT_L}(\cJ_1)&\geq& \frac{p'_{\bT_L}(\cJ'_{\bT}\cap\cI_L)}{1+\epsilon}+h_{\bT_L}(\cJ'_{\bT}\cap\cI_L),\\
\label{e.1.4}
p_2(\cJ_2)+h_{\bT_R}(\cJ_2)&\geq& \frac{p'_{\bT_R}(\cJ'_{\bT}\cap\cI_R)}{1+\epsilon}+h_{\bT_R}(\cJ'_{\bT}\cap\cI_R),
\end{eqnarray}
and both $p_1$ and $p_2$, and hence $p_{\bT}$, satisfy $\cS_{\bT}$.
By \raf{e.1.1} and \raf{e.1.2}, we have $p_{\bT}(I)\leq B_{\bT}(I)$ for all $I\in\cJ_1\cup \cJ_2\setminus \cI_0$. By \raf{e.1.1} and line 13 of the procedure, we also have $p_1([s,w])\le B_{\bT}(I)-(1+\epsilon)v(I,R_{\bT})$ for all $I=[s,t] \in\cI_0\cap\cJ_1$. Since $p_2$ satisfies $\cS(R_{\bT})$ (c.f. line 16), and hence is $\epsilon$-consistent with $R_{\bT}$, we get by \raf{ext} that $p_2([w,t])\le (1+\epsilon)v(I,R_{\bT})$ for all $I=[s,t]\in\cI_0$. Combining this with the above inequality gives $p_{\bT}(I)=p_1[s,w]+p_2[w,t]\le B_{\bT}(I)$ for all $I=[s,t]\in\cI_0\cap\cJ_1$, and hence proves (i).

Now we prove (ii). We have the following: $p'_{\bT_L}(\cJ'_{\bT}\cap\cI_L\setminus\cI_0)=p'_{\bT}(\cJ_T'\cap\cI_L\setminus\cI_0)$, $h_{\bT_L}(\cJ_T'\cap\cI_L\setminus\cI_0)=h_{\bT}(\cJ_T'\cap\cI_L\setminus\cI_0)$, $p'_{\bT_R}(\cJ'_{\bT}\cap I_R)=p'_{\bT}(\cJ_T'\cap\cI_R)$, $h_{\bT_R}(\cJ'_{\bT}\cap I_R)=h_{\bT}(\cJ_T'\cap\cI_R)$, and
\begin{eqnarray*}
p'_{\bT_L}(\cJ_{\bT}'\cap\cI_0)&=&\sum_{I=[s,t]\in\cJ_{\bT}'\cap\cI_0}p'_{\bT}([s,w]),\\
h_{\bT_L}(\cJ_{\bT}'\cap\cI_0)&=&\sum_{I=[s,t]\in\cJ_{\bT}'\cap\cI_0}h_{\bT}([s,t])+\sum_{I=[s,t]\in\cJ_{\bT}'\cap\cI_0}v(I,R_{\bT}),\\
                              &\geq& \sum_{I\in\cJ_{\bT}'\cap\cI_0}h_{\bT}(I)+\sum_{I=[s,t]\in\cJ_{\bT}'\cap\cI_0}\frac{p_{\bT}'([w,t])}{1+\epsilon},
\end{eqnarray*}
where the last inequality follows by \raf{ext}. Summing all these together gives (ii) and concludes the proof of the lemma.
\qed

\medskip

\noindent{\bf Proof of Lemma \ref{lem:basegadget}}.~ 
Consider the pair of intervals $\{a_i,b_i\}$ for each $i = 1, \cdots, 4$. The maximum
profit that can be obtained from such a pair is $2$, which is obtained by setting
either $p(e_i) = 1$, or $p(e_i) = 2$. Any other price clearly yields a smaller
profit. 
Similarly if we consider only the intervals of type $D$,
the maximum profit is obtained by setting $p(e_2) + p(e_3) = 2$, or 
$p(e_2) + p(e_3) = 4$. This gives us $4$ price vectors that give us maximum
profit from all except the type $C$ intervals, viz.
$(1,1,1,1), (2,2,2,2), (1,2,2,1), (2,1,1,2)$. In the first case, we only
obtain a profit of $4$ from the type $C$ intervals for a total profit of
$16$, while in the second case, we exceed the budget
of both the type $C$ intervals giving us a profit of only $12$.
Thus there are only two profit maximizing price assignments.\qed
\qed

\medskip

\noindent{\bf Proof of Lemma \ref{lem:consistency}}.~ 
Consider the gadget for variable $x_i$. If the gadget is consistent, we see that both the consistency
gadget, and the type $H$ interval spend their entire budget, 
and we obtain a profit of $2mn^2 + 6(2i-2) + 6 + 36$. 
Suppose $B^1_i$ is TRUE and $B^2_i$ is FALSE. Then, we are forced to set the price of
edge $h$ to $mn^2-1$, otherwise the consistency gadget is unable to purchase it's edges and we lose
at least $mn^2 + 6(2i-2) + 6$ from the total profit. However, by setting $p(h) = mn^2-1$, the maximum profit
we obtain is at most $(mn^2-1) + mn^2 + 6(2i-2) + 36$, which is smaller than the maximum profit by $1$ unit. 
On the other hand, if $B^1_i$ is FALSE, and $B^2_i$ is TRUE, we lose $1$ unit from the maximum profit
since we cannot raise the price of edge $h$ to more than $mn^2$, and the
consistency gadget is unable to spend it's entire budget. Hence, the maximum profit is obtained only when the variable gadget
is consistent.
\qed

\medskip

\noindent{\bf Proof of Lemma \ref{lem:clause}}.~ 
Consider a consistent price assignment, with the edge $h$ having a price of $n^2$ and a clause \xy . 
If the clause \xy\ is 
not satisfied, then the gadgets for variables $x_i$ and $x_j$ have a FALSE price assignment,
and the prices for the edges in the gadgets for $x_i$ and $x_j$ are $2,1,1,2$, and $2,1,1,2$
respectively. Then, 
it is easy to see that the price of the bundle of the clause interval in this case
is $mn^2 + 6(i + j - 2) + 4$, exceeding the budget of the clause interval. In the other three
cases, the price of the bundle is at most $mn^2 + 6(i + j - 2) + 3$, and the profit 
from the clause interval is at least $mn^2 + 6(i+j-2) + 2$ (In the case when both $x_i$ and 
$x_j$ are TRUE, the profit is $mn^2 + 6(i+j-2) + 2$, in the two other satisfying assignments
the profit is $mn^2 + 6(i+j-2) + 3$).
The proofs for the other types of clauses \xcy , \xyc , and \xcyc\ 
are similar.\qed
\qed

\medskip

\noindent{\bf Proof of Theorem \ref{t-strongNP}}.~ 
Suppose the instance of MAX-2-SAT has $k$ satisfied clauses. We set the prices for the edges
corresponding to the two basic gadgets corresponding to the variable $x_i$ to TRUE if $x_i = 1$
and FALSE otherwise. We set the price of edge $h$ to $n^2$. This gives a total profit of 
\begin{eqnarray*}
S &=& T\cdot 36n + T\cdot \sum_{i=1}^n (mn^2 + 6(2i-2) + 6) + T\cdot mn^2 + kmn^2 + O(kn)\\
\end{eqnarray*}
The first term of the sum comes
from the basic gadgets of each variable set to TRUE or FALSE, the second term comes from the
consistency gadgets, the third term comes from the $H$ intervals, and the last two terms,
from the satisfied clause gadgets.

To show the reverse direction, consider a price assignment that achieves a profit of at least
$S$.
We claim that in an optimal
price assignment, the gadgets corresponding to the variables are all consistent, and the edge $h$
has a price of $mn^2$. Note first that the maximum profit we can gain from all the clauses
is $O(m^2n^2)$. Now, if we have larger than, say $T=m^3n^2$ copies of each variable gadget, 
it follows from Lemma \ref{lem:consistency} that 
we only lose by making either the variable gadgets inconsistent, or if the $H$ intervals
and the consistency gadgets do not spend their entire budget.
Hence, in the optimal solution, 
the variables are consistent, and $h$ has a price of $mn^2$. This then leaves only the clause intervals. 
Note that our profit maximizing pricing will try to maximize the number of clause intervals satisfied,
since the clause intervals differ by at most $O(n)$ in their budgets, but their individual budgets 
themselves are at least $mn^2$.
By the obvious assignment of truth values to the variables from the variable price assignment, 
we get an assignment that satisfies $k$ clauses.
\qed

\newpage

\noindent
{\bf\Large Appendix B: The gadgets used in the NP-hard construction}

\begin{figure}[htpb]
\begin{center}
\input{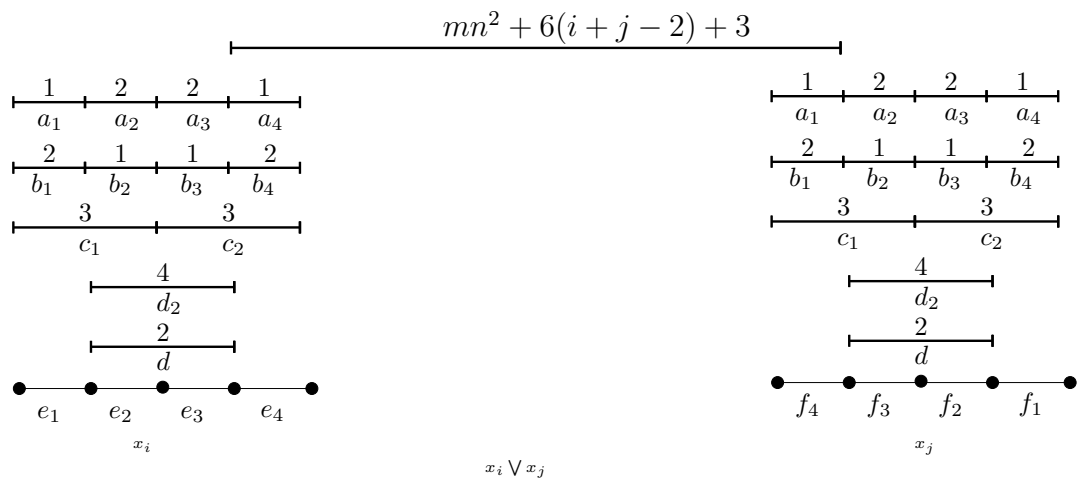}
\end{center}
\caption{The clause gadget for clause \xy}
\label{fig:prxy}
\end{figure}

\begin{figure}[htpb]
\begin{center}
\input{prxcy1}
\end{center}
\caption{The clause gadget for clause \xcy}
\label{fig:prxcy}
\end{figure}

\begin{figure}[htpb]
\begin{center}
\input{prxyc1}
\end{center}
\caption{The clause gadget for clause \xyc}
\label{fig:prxyc}
\end{figure}

\begin{figure}[htpb]
\begin{center}
\input{prxcyc1}
\end{center}
\caption{The clause gadget for clause \xcyc}
\label{fig:prxcyc}
\end{figure}

\end{document}